\begin{document}

\ArticleType{RESEARCH PAPER}
\Year{2018}
\Month{}
\Vol{61}
\No{}
\DOI{}
\ArtNo{}
\ReceiveDate{}
\ReviseDate{}
\AcceptDate{}
\OnlineDate{}
\title {On Sub-Packetization and Access Number of Capacity-Achieving PIR Schemes for MDS Coded Non-Colluding Servers}{}
\author[1,2]{Jingke XU}{}
\author[1,2]{Zhifang ZHANG}{{zfz@amss.ac.cn}}

\AuthorMark{ Xu J K}

\AuthorCitation{ Xu J K,  Zhang Z F}

\address[1]{Key Laboratory of Mathematics Mechanization,
Academy of Mathematics and Systems Science, \\ Chinese Academy of Sciences, Beijing {\rm 100190}, China }
\address[2]{School of Mathematical Sciences, University of Chinese Academy of Sciences, Beijing {\rm 100049}, China}

\abstract{Consider the problem of private information retrieval (PIR) over a distributed storage system where $M$ records are stored across $N$ servers by using an $[N,K]$ MDS code. For simplicity, this problem is usually referred  as the coded PIR problem. In 2016, Banawan and Ulukus designed the first capacity-achieving coded PIR scheme with sub-packetization $KN^{M}$ and access number $MKN^{M}$, where capacity characterizes the minimal download size for retrieving per unit of data, and sub-packetization and access number are two metrics closely related to implementation complexity.
In this paper, we focus on minimizing the sub-packetization and the access number for linear capacity-achieving  coded PIR schemes. We first determine the lower bounds on sub-packetization and  access number, which are $Kn^{M-1}$ and $MKn^{M-1}$, respectively, in the nontrivial cases (i.e. $N\!>\!K\!\geq\!1$ and $M\!>\!1$), where $n\!=\!N/{\rm gcd}(N,K)$. We then design a general linear capacity-achieving coded PIR scheme to simultaneously attain these two bounds, implying tightness of both bounds.}

\keywords{private information retrieval, sub-packetization, access number, distributed storage, MDS code}

\maketitle{}

\section{Introduction}\label{sec1}
Private information retrieval (PIR) is a canonical problem in the study of privacy issues that arise from the  retrieval of information from public databases. Specifically, PIR involves a database that contains $M$ records and a user’s with query interest $\theta\in\{1,...,M\}$. The goal is to make  the user get retrieves the $\theta$th record without revealing the index $\theta$. In the information theoretic sense, the PIR problem can only be solved trivially solved by downloading all $M$ records if the database is stored in one server. Therefore, in FOCS'95
Chor et al. \cite{CKGS95FOCS:PIR,CKGS98FOCS:PIR} developed the distributed formulation for of PIR, where the database is stored across $N$ servers and the user can communicate with all $N$ servers. The privacy requirement is to ensure the secrecy of $\theta$  against any individual server. Since then, PIR has become a central research topic in the computer science literature,
see \cite{Gas04EATCS:SurveyonPIR} for a survey on PIR.

A central issue in PIR is minimizing the communication cost, which is usually measured by the total number of bits transferred from the user to the servers (i.e. the upload size) and  from the servers to the user (i.e. the download size). In the initial setting of PIR where each record is set to one bit, the minimum communication cost achieved is $M^{O(\frac{1}{\log\log M})}$ \cite{Dvir&Gopi15STOC:2PIR,Efremenko09:LDC}. However, in real-world applications, it is common for the size of each record  to be arbitrarily large. Therefore, the upload size is usually negligible compared to the download size. Consequently, the communication cost can be measured by considering only the download size. Specifically, define the {\it rate} of a PIR scheme as the ratio between the size of the retrieved record and the download size, and define the {\it capacity} as the supremum of the rate over all PIR schemes. In addition, the reciprocal of the capacity describes the minimum possible download size per unit of retrieved records. Recently, much work has been done on determining the capacity of PIR in various cases:

{\it Replication-based PIR: } In this case, each of the $N$ servers stores a replication of the database. In \cite{Sun&Jafar16:CapacityPIR}, Sun and Jafar proved that the capacity in the non-colluding case is {\small$(1+\frac{1}{N}+\cdots+\frac{1}{N^{M-1}})^{-1}$}. In \cite{Sun&Jafar16:ColludPIR}, they derived the capacity for the colluding case (i.e. ensuring the secrecy of the retrieval index $\theta$ against any subset containing at most $T$ colluding servers for $1\!\leq\!T\!<\!N$) and the robust case (i.e. some servers may fail to respond). They also determined the capacity of PIR with symmetric privacy in \cite{Sun&Jafar16:CapaSymmPIR}, where symmetric privacy means that the user is  required to get no information about the record other than the $\theta$th record. Banawan and Ulukus ~\cite{Bana&Uluk17:CapacityMPIR} recently derived the capacity of a multi-message PIR with replicated non-colluding servers for the case of retrieving more than half records. In \cite{Bana&Uluk17:CapacityBTPIR}, they studied the capacity of PIR with colluding and Byzantine servers. Other studies considered the PIR problem when some side information is available to the user ~\cite{Tandom17:CapaCAPIR,YWKBUlukusLofCAPIT:CapaCAPIR,KGHERS:CapaCAPIR,Chen&Wang&Jafar17:CapacityPIRPSI}.

{\it Coded PIR:} In this case, the database is stored across $N$ servers using some code. In particular, an $[N,K]$ MDS code is mostly used. Banawan and Ulukus \cite{Bana&Uluk16:CapacityPIRCoded} proved that the capacity of the PIR problem with  MDS coded non-colluding servers (i.e. coded PIR) is {\small$(1+\frac{K}{N}+\cdots+\frac{K}{N^{M-1}})^{-1}$}. In \cite{TR16:MDSPIR}, the authors designed a scheme for MDS coded non-colluding servers with rate $1-\frac{K}{N}$.
 The capacity of PIR with symmetric privacy  based on MDS coded non-colluding servers was derived in \cite{Wang&Sko16:CapaSymmePIR}. In \cite{HGHK16:CodedTPIR}, the authors presented a framework for PIR from Reed-Solomon coded colluding servers, and designed a scheme with the rate $1-\frac{K+T-1}{N}$. Another PIR scheme for MDS coded colluding servers was later presented in \cite{YZhangGGe17:CodedTPIR} with the rate {\small$(1+r+\cdots+r^{M-1})^{-1}$}, where $r=1-\binom{N-T}{K}/\binom{N}{K}$.  It remains an open problem to determine the capacity of PIR based on MDS coded colluding servers.

Determining PIR capacity is usually accomplished in two ways: proving an upper bound on the capacity and designing a general PIR scheme with rate attaining the upper bound. Therefore, these schemes are called capacity-achieving PIR schemes. Almost all existing capacity-achieving PIR schemes are implemented by dividing each record into sub-packets (say, $L$ sub-packets) and querying some linear combinations of the sub-packets from each server. We call  $L$ as the sub-packetization of the scheme and call  the total number of sub-packets accessed by all $N$ servers as the access number. Although large sub-packetization helps to improve the PIR rate, it also increases complexity in implementation because larger sub-packetization means more combinations, and thus therefore more multiplications are involved.
The problems of reducing sub-packetization and the access-optimal property have been studied in depth in the literature of minimum storage regenerating codes \cite{Ye&Barg17:MSRsub,B&Kumar17:MSRsub}. However, for the PIR problem, most known capacity-achieving PIR schemes with asymmetric privacy have demonstrate exponential sub-packetization and access number. For example, the capacity-achieving scheme in \cite{Sun&Jafar16:CapacityPIR} exhibits sub-packetization $N^M$ and access number $MN^M$,  and the scheme in \cite{Bana&Uluk16:CapacityPIRCoded} has  sub-packetization $KN^M$ and access number $MKN^M$. On the other hand, a scheme with sub-packetization $K(N-K)$  was designed  in \cite{TR16:MDSPIR} at the sacrifice of failing to achieve the desired of capacity. Theoretically, it is meaningful to  characterize the minimum sub-packetization for achieving capacity in linear PIR schemes.

Our research interest is to minimize both the sub-packetization and the access number for linear capacity-achieving PIR schemes. For replication-based PIR, the paper \cite{Sun&Jafar16:OptimalPIR} first characterized the optimal download cost for arbitrary record length and demonstrated that the optimal sub-packetization for $T=1$ is $N^{M-1}$. One of our recent work \cite{Zhang&Xu17:OptimalSubpacketization} extends this result to general $T$ and proves that the optimal sub-packetization for capacity-achieving PIR schemes over replicated servers is $dn^{M-1}$, where $d={\rm gcd}(N,T),n=N/d$. In this paper, we focus on the sub-packetization and the access number for linear capacity-achieving PIR schemes over MDS coded non-colluding servers. Our contributions are three-fold:
  \begin{enumerate}
  \item A lower bound on the sub-packetization $L$, i.e., $L\geq Kn^{M-1}$, where $n=N/{{\rm gcd}(N,K)}$.
  \item A lower bound on the access number $\omega$, i.e., $\omega \geq MKn^{M-1}$.
  \item A general linear capacity-achieving coded PIR scheme with sub-packetization $L=Kn^{M-1}$ and access number $\omega=MKn^{M-1}$, which implies that our lower bounds are both tight. In other words, we design a capacity-achieving PIR scheme that simultaneously achieves the optimal sub-packetization and the optimal access number.
  \end{enumerate}

The rest of this paper is organized as follows.
First, a formal description of the coded PIR model and a brief recall of the proof for capacity are provided in Section \ref{sec2}. Lower bounds on the sub-packetization and the access number are then presented in Section \ref{sec3} and Section \ref{sec4}, respectively. Finally, a general linear capacity-achieving coded PIR scheme that simultaneously attains the two lower bounds is presented in Section \ref{sec5}.

\section{Preliminaries}\label{sec2}
\subsection{Notations and the PIR model}\label{sec2a}
For positive integers  $m,n\in\mathbb{N}$ with $m<n$, we denote by $[m:n]$  the set $\{m,m+1,...,n\}$ and denote  by $[n]$ the set $\{1,2,...,n\}$.  For a vector $Q=(q_1,...,q_n)$ and any subset $\Gamma=\{i_1,...,i_m\}\subseteq [n]$, let $Q_\Gamma=(q_{i_1},...,q_{i_m})$.
Moreover, to differentiate indices of the servers for records, we use superscripts as indices of the servers and subscripts for the records. For example, we use $Q_\theta^{(i)}$ to denote the a query to the $i$th server when the user wants  the $\theta$th record.
Throughout the paper, we use cursive capital letters to denote random variables such as $\mathcal{W}, \mathcal{Q}$, etc.

Suppose there are $M$ records denoted by $\mathcal{W}_1,\dots,\mathcal{W}_M$. Each record consists of $L$~symbols drawn independently and uniformly from the finite field $\mathbb{F}_q$, i.e.,
\begin{equation}\label{q1}
\forall i\in[M],~H(\mathcal{W}_i)=L,~~~H(\mathcal{W}_1,...,\mathcal{W}_M)=\sum_{i=1}^MH(\mathcal{W}_i)=ML.\;
\end{equation}
where $H(\cdot)$ denotes the entropy function with base $q$.

Moreover, the $M$ records are stored across $N$ servers through an $[N,K]$ MDS code. Therefore, we further assume $L=K\tilde{L}$ and $\mathcal{W}_j\in \mathbb{F}^{K\times\tilde{L}}_q$ for all $j\in[M]$. Let $G=({\bf g}_1,{\bf g}_2,...,{\bf g}_N)\in\mathbb{F}^{K\times N}_q$ be a generator matrix of an $[N,K]$ MDS code over $\mathbb{F}_q$.
Then following MDS encoding, the $i$th server, $\rm{Serv}^{(i)}$, $1\leq i\leq N$, stores
$\mathcal{C}^{(i)}=(\mathcal{C}^{(i)}_1,...,\mathcal{C}^{(i)}_M)$ where
$$\mathcal{C}^{(i)}_j={\bf g}^\tau_i\mathcal{W}_j\in\mathbb{F}_q^{1\times\tilde{L}}{\rm~~for~~}1\leq j\leq M\;.$$
Because of the $[N,K]$ MDS encoding, for any subset $\Gamma\subseteq [N]$ with $|\Gamma|=K$, we have
$H(\mathcal{C}^\Gamma)=ML$ and $H(\mathcal{W}_{[M]}|\mathcal{C}^\Gamma)=0$,
where $\mathcal{C}^\Gamma=\{\mathcal{C}^{(i)}\mid i\in\Gamma\}$.

A PIR scheme allows a user to retrieve a record, say $\mathcal{W}_\theta$, for some $\theta\in[M]$ by accessing the $N$ servers while ensuring the secrecy of the index $\theta$ against any individual server.
 PIR consists of two phases:
 \begin{itemize}
  \item {\bf Query phase.}
    Given an index $\theta\in[M]$ and some random resources $\mathcal{S}$, the user computes ${\rm Que}(\theta,\mathcal{S})=(\mathcal{Q}_\theta^{(1)},...,\mathcal{Q}_\theta^{(N)})$, and sends $\mathcal{Q}_\theta^{(i)}$ to ${\rm Serv}^{(i)}$ for $1\leq i\leq N$. Note that $\mathcal{S}$ and $\theta$ are private information only known  to the user,
  and the function ${\rm Que}(\cdot,\cdot)$ is the {\it query function} determined by the scheme.
   For simplicity, we define the query set $\mathcal{Q}=\{\mathcal{Q}_\theta^{(i)},\mathcal{S}|i\in[N], \theta\in[M]\}$.
    Then \begin{equation}\label{eq0}
      I(\mathcal{C}^{[N]};\mathcal{Q})=0,\end{equation}
      which implies that the user generates  queries without knowledge of the exact content of the coded records.
  \item {\bf Response phase.} For $1\leq i\leq N$, the $i$th $\rm{Serv}^{(i)}$ at receiving $\mathcal{Q}_\theta^{(i)}$, computes ${\rm Ans}^{(i)}(\mathcal{Q}_\theta^{(i)},\mathcal{C}^{(i)})=\mathcal{A}_\theta^{(i)}$ and sends it to the user, where ${\rm Ans}^{(i)}(\cdot,\cdot)$ is ${\rm Serv}^{(i)}$'s {\it answer function} determined by the scheme. ObviouslyEvidently, \begin{equation}\label{eq1}
      H(\mathcal{A}_\theta^{(i)}|\mathcal{C}^{(i)};\mathcal{Q}_\theta^{(i)})=0.\end{equation}
\end{itemize}

Moreover, a coded PIR scheme must satisfy the following two conditions:
\begin{itemize}
\item[(1)]{\it Correctness: }
     \begin{equation}\label{eqquery}
      H(\mathcal{W}_\theta|\mathcal{A}^{[N]}_\theta,\mathcal{Q}^{[N]}_\theta,\mathcal{S})=0,
      \end{equation}
      which implies that the user can definitely recover the record $\mathcal{W}_\theta$
      after receiving  responses from all servers. Based on the definition of $\mathcal{Q}$, the correctness conditions can also be represented as
      \begin{equation}\label{eq2}H(\mathcal{W}_\theta|\mathcal{A}^{[N]}_\theta,\mathcal{Q})=0.\end{equation}
\item[(2)]{\it Privacy:} For any $i\in[N]$,
\begin{equation}\label{eq3}
     I(\theta;\mathcal{Q}^{(i)}_\theta,\mathcal{A}^{(i)}_\theta,\mathcal{C}^{(i)})=0,
      \end{equation} which implies that any individual server gets  no information about the index $\theta$. Note that $\mathcal{Q}^{(i)}_\theta,\mathcal{A}^{(i)}_\theta,\mathcal{C}^{(i)}$ is the information held by the  $\rm{Serv}^{(i)}$.
\end{itemize}

Set $D=\sum_{i=1}^{N}H(\mathcal{A}^{(i)}_\theta)$, which actually denotes the download size. From the privacy condition, we have $I(\theta;\mathcal{A}^{(i)}_\theta)=0$, which implies that $D$ is independent of the index $\theta$.
Thus, we can define the rate and the capacity of PIR schemes as follows.

\begin{definition}\label{def0}(PIR Rate and Capacity) The PIR rate $\mathcal{R}$ of a PIR scheme is defined as
{\small \begin{equation*}
 \mathcal{R}= \frac{L}{D}=\frac{H(\mathcal{W}_\theta)}{\sum^N_{i=1}H(\mathcal{A}^{(i)}_\theta)} {\rm ~for~any~} \theta\in[M].
\end{equation*}}
The capacity $\mathcal{C}_{\mbox{\tiny C-PIR}}$ is the supremum of $\mathcal{R}$ over all PIR schemes.
\end{definition}
\begin{definition}\label{def1}(Sub-packetization and Access Number)
Suppose $L=K\tilde{L}$ and each record is expressed as a $K\times \tilde{L}$ matrix over $\mathbb{F}_q$, i.e., $W_j\in\mathbb{F}_q^{K\times \tilde{L}}$ for $1\leq j\leq M$. Using the notations defined previously,
a coded PIR scheme is called linear  if for retrieving any record $W_\theta$, $\theta\in[M]$, the answers from each server are derived as linear combinations of the data stored in that server, i.e., for $1\leq i\leq N$,
 {\small\begin{equation}\label{LPIR}
 A_\theta^{(i)}=\sum_{j=1}^MC^{(i)}_jQ_{\theta,j}^{(i)}\in\mathbb{F}_q^{~\gamma_i}\;,
 \end{equation} }
where $Q_{\theta,j}^{(i)}$ is an $\tilde{L}\times\gamma_i$ matrix over $\mathbb{F}_q$, $1\leq j\leq M$.
We call  $L$ as the sub-packetization of the PIR scheme. Moreover, we define the access number $\omega$ as the maximum number of sub-packets accessed by all servers for retrieving any record, i.e.,
{\small \begin{equation}\label{WPIR}
 \omega=\max_{\theta\in[M]}\sum_{i=1}^N\sum_{j=1}^M\texttt{RN}(Q_{\theta,j}^{(i)}),\;
 \end{equation} }
 where $\texttt{RN}(Q_{\theta,j}^{(i)})$ denotes the number of nonzero rows  in  $Q_{\theta,j}^{(i)}$.
\end{definition}

\subsection{Capacity of coded PIR schemes}\label{sec2b}
Note that the capacity of coded PIR has been determined in ~\cite{Bana&Uluk16:CapacityPIRCoded}, i.e.,
$\mathcal{C}_{\mbox{\tiny C-PIR}}=(1+\frac{K}{N}+\frac{K^2}{N^2}+\dots+\frac{K^{M-1}}{N^{M-1}})^{-1}.$
We briefly restate some key lemmas during the derivation of this capacity, which will be used in later sections. Proofs of the following two lemmas can be found in \cite{Bana&Uluk16:CapacityPIRCoded}.

\begin{lemma}\label{lem2} For a coded PIR scheme, for any $\theta,\theta^\prime\in[M]$, any subset $\Lambda\subseteq[M]$ and $i\in[N]$ ,
\begin{equation}\label{eq5}H(\mathcal{A}^{(i)}_{\theta}| \mathcal{W}_\Lambda,\mathcal{Q})=H(\mathcal{A}^{(i)}_{\theta^\prime}| \mathcal{W}_\Lambda,\mathcal{Q})\;.\end{equation}
\end{lemma}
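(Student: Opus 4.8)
The plan is to condition on the whole query set $\mathcal{Q}$, which reduces $\mathcal{A}^{(i)}_\theta$ to a deterministic function of the stored content $\mathcal{C}^{(i)}$ whose form is governed only by the single query $\mathcal{Q}^{(i)}_\theta$ sent to ${\rm Serv}^{(i)}$, and then to invoke the privacy constraint at ${\rm Serv}^{(i)}$ to see that the law of that query --- and hence the conditional entropy in question --- does not depend on $\theta$. Throughout, $i$ is fixed.

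First I would record two reductions. By \eqref{eq1}, once a realization $\mathcal{Q}=Q$ is fixed (and it contains $\mathcal{Q}^{(i)}_\theta$ as a coordinate), $\mathcal{A}^{(i)}_\theta={\rm Ans}^{(i)}(\mathcal{Q}^{(i)}_\theta,\mathcal{C}^{(i)})$ is a deterministic function of $\mathcal{C}^{(i)}$, with the functional form indexed by the value $q^{(i)}_\theta(Q)$ that $\mathcal{Q}^{(i)}_\theta$ takes in $Q$. Next, using that any $K$ of the coded blocks recover $\mathcal{W}_{[M]}$, so that $H(\mathcal{W}_\Lambda,\mathcal{C}^{(i)}\mid\mathcal{C}^{[N]})=0$, together with \eqref{eq0}, I would obtain $I(\mathcal{W}_\Lambda,\mathcal{C}^{(i)};\mathcal{Q})=0$; consequently, for every realization $\mathcal{Q}=Q$ the conditional distribution of $\mathcal{C}^{(i)}$ given $(\mathcal{W}_\Lambda,\mathcal{Q}=Q)$ coincides with its distribution given $\mathcal{W}_\Lambda$ alone.

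Putting these together, I would set $h_\Lambda(q)=H({\rm Ans}^{(i)}(q,\mathcal{C}^{(i)})\mid\mathcal{W}_\Lambda)$, a function of the query value $q$ that does not involve $\theta$. The two reductions give
\[
H(\mathcal{A}^{(i)}_\theta\mid\mathcal{W}_\Lambda,\mathcal{Q})=\sum_{Q}\Pr[\mathcal{Q}=Q]\,h_\Lambda(q^{(i)}_\theta(Q))=\mathbb{E}[h_\Lambda(\mathcal{Q}^{(i)}_\theta)],
\]
so the left-hand side depends on $\theta$ only through the distribution of the query $\mathcal{Q}^{(i)}_\theta$. Finally I would apply the privacy constraint \eqref{eq3} at ${\rm Serv}^{(i)}$: since $\mathcal{A}^{(i)}_\theta$ is a function of $(\mathcal{Q}^{(i)}_\theta,\mathcal{C}^{(i)})$, the identity $I(\theta;\mathcal{Q}^{(i)}_\theta,\mathcal{A}^{(i)}_\theta,\mathcal{C}^{(i)})=0$ forces the marginal law of $\mathcal{Q}^{(i)}_\theta$ to be the same whether the requested index is $\theta$ or $\theta'$; hence the displayed expression takes the same value for $\theta$ and $\theta'$, which is the claim.

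The step I expect to be the main obstacle is the second reduction: making rigorous that conditioning on the entire $\mathcal{Q}$ --- not merely on $\mathcal{Q}^{(i)}_\theta$ --- neither disturbs the conditional law of $\mathcal{C}^{(i)}$ given $\mathcal{W}_\Lambda$ nor introduces any dependence on $\theta$ beyond that of the single query $\mathcal{Q}^{(i)}_\theta$. This is exactly where \eqref{eq0} and the MDS recovery property enter, and it is worth noting that only privacy against the one server ${\rm Serv}^{(i)}$ is used, not the full privacy guarantee.
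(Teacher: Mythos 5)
Your argument is correct and is essentially the standard proof of this lemma (the paper itself omits the proof, deferring to Banawan--Ulukus): privacy forces the law of $\mathcal{Q}^{(i)}_\theta$ at ${\rm Serv}^{(i)}$ to be $\theta$-independent, and the independence of $\mathcal{Q}$ from $\mathcal{C}^{[N]}$ (hence from $(\mathcal{W}_\Lambda,\mathcal{C}^{(i)})$) reduces the conditional entropy to an expectation of a $\theta$-free function of that query. Your closing remark that only single-server privacy is needed is also accurate.
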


\begin{lemma}\label{lem3} For a coded PIR scheme, for any $\theta\in[M]$, any subset $\Lambda\subseteq[M]$ and $\Gamma\subseteq[N]$ with $|\Gamma|=K$,
\begin{equation}\label{eq6}H(\mathcal{A}^{\Gamma}_{\theta}| \mathcal{W}_\Lambda,\mathcal{Q})=\sum_{i\in\Gamma}H(\mathcal{A}^{(i)}_{\theta}| \mathcal{W}_\Lambda,\mathcal{Q})\;.\end{equation}
\end{lemma}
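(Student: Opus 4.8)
The plan is to reduce the claimed additivity to one structural fact: conditioned on $(\mathcal{W}_\Lambda,\mathcal{Q})$, the coded blocks $\{\mathcal{C}^{(i)}:i\in\Gamma\}$ are mutually independent, and this holds precisely because $|\Gamma|=K$ and the code is MDS. The inequality ``$\le$'' in (\ref{eq6}) is just subadditivity of entropy, so the entire content of the lemma is the matching lower bound, i.e.\ the mutual independence of the answers $\{\mathcal{A}^{(i)}_\theta:i\in\Gamma\}$ given $(\mathcal{W}_\Lambda,\mathcal{Q})$.

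\textbf{Step 1 (the queries carry no information about the data).} First I would show that $\mathcal{Q}$ is independent of the pair $(\mathcal{C}^{[N]},\mathcal{W}_{[M]})$. Indeed $\mathcal{W}_{[M]}$ is a deterministic function of $\mathcal{C}^{[N]}$ (any $K$ of the $N$ coded blocks form an information set by the MDS property; equivalently $H(\mathcal{W}_{[M]}\mid\mathcal{C}^{[N]})=0$), so by the chain rule of mutual information and (\ref{eq0}),
\[
I\big(\mathcal{C}^{[N]},\mathcal{W}_{[M]};\mathcal{Q}\big)=I\big(\mathcal{C}^{[N]};\mathcal{Q}\big)=0.
\]
In particular $\mathcal{Q}$ is independent of $(\mathcal{C}^{\Gamma},\mathcal{W}_\Lambda)$. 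Also, by (\ref{eq1}) each $\mathcal{A}^{(i)}_\theta$ is a deterministic function of $(\mathcal{C}^{(i)},\mathcal{Q}^{(i)}_\theta)$ and $\mathcal{Q}^{(i)}_\theta$ is part of $\mathcal{Q}$, so once $\mathcal{Q}$ is fixed, $\mathcal{A}^{(i)}_\theta$ is a function of $\mathcal{C}^{(i)}$ alone. It therefore suffices to prove that $\{\mathcal{C}^{(i)}:i\in\Gamma\}$ are mutually independent conditioned on $\mathcal{W}_\Lambda$, and then to add the conditioning on $\mathcal{Q}$ using the independence just established.

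\textbf{Step 2 (the MDS core).} Write $\mathcal{C}^{(i)}=(\mathcal{C}^{(i)}_1,\dots,\mathcal{C}^{(i)}_M)$ with $\mathcal{C}^{(i)}_j={\bf g}^\tau_i\mathcal{W}_j$. Fixing the value of $\mathcal{W}_\Lambda$ freezes the symbols $\mathcal{C}^{(i)}_j$ for $j\in\Lambda$; by (\ref{q1}) the remaining records $\{\mathcal{W}_j:j\notin\Lambda\}$ are still independent and uniform on $\mathbb{F}_q^{K\times\tilde L}$. For each fixed $j\notin\Lambda$, the map $\mathcal{W}_j\mapsto(\mathcal{C}^{(i)}_j)_{i\in\Gamma}$ equals $\mathcal{W}_j\mapsto G_\Gamma^\tau\mathcal{W}_j$, where $G_\Gamma$ is the $K\times K$ submatrix of $G$ on the columns indexed by $\Gamma$; since $G_\Gamma$ is invertible by the MDS property, this is a bijection of $\mathbb{F}_q^{K\times\tilde L}$ onto itself, so $(\mathcal{C}^{(i)}_j)_{i\in\Gamma}$ is uniform on $\mathbb{F}_q^{K\times\tilde L}$ and in particular its $K$ rows $\{\mathcal{C}^{(i)}_j:i\in\Gamma\}$ are mutually independent. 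Combining over the independent records $j\notin\Lambda$ shows that the whole family $\{\mathcal{C}^{(i)}_j:i\in\Gamma,\ j\notin\Lambda\}$ is mutually independent; hence $H(\mathcal{C}^{\Gamma}\mid\mathcal{W}_\Lambda)=\sum_{i\in\Gamma}H(\mathcal{C}^{(i)}\mid\mathcal{W}_\Lambda)$.

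\textbf{Step 3 (conclusion).} Combining Steps 1 and 2, the blocks $\{\mathcal{C}^{(i)}:i\in\Gamma\}$ are mutually independent conditioned on $(\mathcal{W}_\Lambda,\mathcal{Q})$; and since, given $\mathcal{Q}$, each $\mathcal{A}^{(i)}_\theta$ is a function of $\mathcal{C}^{(i)}$, the answers $\{\mathcal{A}^{(i)}_\theta:i\in\Gamma\}$ are mutually independent conditioned on $(\mathcal{W}_\Lambda,\mathcal{Q})$, which is exactly (\ref{eq6}). Equivalently, one may expand $H(\mathcal{A}^{\Gamma}_\theta\mid\mathcal{W}_\Lambda,\mathcal{Q})$ by the chain rule along an enumeration $i_1,\dots,i_K$ of $\Gamma$ and use this conditional independence to replace each term $H(\mathcal{A}^{(i_t)}_\theta\mid\mathcal{A}^{(i_1)}_\theta,\dots,\mathcal{A}^{(i_{t-1})}_\theta,\mathcal{W}_\Lambda,\mathcal{Q})$ by $H(\mathcal{A}^{(i_t)}_\theta\mid\mathcal{W}_\Lambda,\mathcal{Q})$. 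The step I expect to need the most care is the bookkeeping in Step 1: one must be sure that the private query randomness (the $\mathcal{S}$ hidden inside $\mathcal{Q}$) is independent not merely of $\mathcal{C}^{[N]}$ but of the records themselves, so that conditioning on $\mathcal{Q}$ neither creates nor destroys dependencies among the blocks; everything else is the routine linear-algebra fact that any $K$ columns of an MDS generator matrix form a basis.
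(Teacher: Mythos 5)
Your proof is correct, and it follows essentially the same route as the proof the paper defers to (Banawan--Ulukus \cite{Bana&Uluk16:CapacityPIRCoded}): the invertibility of any $K$ columns of the MDS generator matrix makes the blocks $\{\mathcal{C}^{(i)}\}_{i\in\Gamma}$ mutually independent given $\mathcal{W}_\Lambda$, the independence of $\mathcal{Q}$ from the data lets this survive conditioning on $\mathcal{Q}$, and determinism of the answers then yields additivity of the conditional entropies. Your care in Step 1 about lifting $I(\mathcal{C}^{[N]};\mathcal{Q})=0$ to $I(\mathcal{C}^{[N]},\mathcal{W}_{[M]};\mathcal{Q})=0$ is exactly the bookkeeping the argument needs.
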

Note that from (\ref{eq5}) and (\ref{eq6}) we can immediately determine that for any $\theta,\theta^\prime\in[M]$, any subset $\Lambda\subseteq[M]$, and $\Gamma\subseteq[N]$ with $|\Gamma|=K$,
\begin{equation}\label{eqqq}H(\mathcal{A}^{\Gamma}_{\theta}| \mathcal{W}_\Lambda,\mathcal{Q})=H(\mathcal{A}^{\Gamma}_{\theta^\prime}|\mathcal{W}_\Lambda,\mathcal{Q})\;.\end{equation}
\begin{lemma}\label{lem4} For a coded PIR scheme, for any subset $\Lambda\subseteq[M]$, for any $\theta\in\Lambda$ and any $\theta^\prime\in [M]-\Lambda$,
$$H(\mathcal{A}^{[N]}_{\theta}| \mathcal{W}_\Lambda,\mathcal{Q})\geq \frac{KL}{N}+\frac{K}{N} H(\mathcal{A}^{[N]}_{\theta^\prime}| \mathcal{W}_\Lambda,\mathcal{W}_{\theta^\prime},\mathcal{Q})\;.$$
\end{lemma}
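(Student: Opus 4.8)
The plan is to derive the claimed inequality by combining two ingredients: first, a ``server-averaged'' estimate
$$H(\mathcal{A}^{[N]}_{\theta}|\,\mathcal{W}_\Lambda,\mathcal{Q})\;\geq\;\frac{K}{N}\,H(\mathcal{A}^{[N]}_{\theta^\prime}|\,\mathcal{W}_\Lambda,\mathcal{Q})\;;$$
and second, the identity $H(\mathcal{A}^{[N]}_{\theta^\prime}|\,\mathcal{W}_\Lambda,\mathcal{Q})=L+H(\mathcal{A}^{[N]}_{\theta^\prime}|\,\mathcal{W}_\Lambda,\mathcal{W}_{\theta^\prime},\mathcal{Q})$, which holds because $\theta^\prime\notin\Lambda$. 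Substituting the second into the first and distributing $\tfrac{K}{N}$ gives exactly $H(\mathcal{A}^{[N]}_{\theta}|\,\mathcal{W}_\Lambda,\mathcal{Q})\geq \tfrac{KL}{N}+\tfrac{K}{N}H(\mathcal{A}^{[N]}_{\theta^\prime}|\,\mathcal{W}_\Lambda,\mathcal{W}_{\theta^\prime},\mathcal{Q})$, the statement of the lemma.

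The core step is the averaged estimate, and this is where Lemmas \ref{lem2} and \ref{lem3} are used. Since $\mathcal{A}^\Gamma_\theta$ is a sub-collection of $\mathcal{A}^{[N]}_\theta$ for every $\Gamma\subseteq[N]$, we have $H(\mathcal{A}^\Gamma_{\theta}|\,\mathcal{W}_\Lambda,\mathcal{Q})\leq H(\mathcal{A}^{[N]}_{\theta}|\,\mathcal{W}_\Lambda,\mathcal{Q})$; averaging the left-hand side over all $\binom{N}{K}$ subsets $\Gamma$ with $|\Gamma|=K$, then applying Lemma \ref{lem3} to write $H(\mathcal{A}^\Gamma_{\theta}|\cdot)=\sum_{i\in\Gamma}H(\mathcal{A}^{(i)}_{\theta}|\cdot)$ and Lemma \ref{lem2} to replace each per-server term $H(\mathcal{A}^{(i)}_{\theta}|\cdot)$ by $H(\mathcal{A}^{(i)}_{\theta^\prime}|\cdot)$, yields
$$H(\mathcal{A}^{[N]}_{\theta}|\,\mathcal{W}_\Lambda,\mathcal{Q})\;\geq\;\frac{1}{\binom{N}{K}}\sum_{|\Gamma|=K}\ \sum_{i\in\Gamma}H(\mathcal{A}^{(i)}_{\theta^\prime}|\,\mathcal{W}_\Lambda,\mathcal{Q})\;=\;\frac{K}{N}\sum_{i=1}^{N}H(\mathcal{A}^{(i)}_{\theta^\prime}|\,\mathcal{W}_\Lambda,\mathcal{Q})\;\geq\;\frac{K}{N}H(\mathcal{A}^{[N]}_{\theta^\prime}|\,\mathcal{W}_\Lambda,\mathcal{Q}),$$
where the middle equality uses that a fixed server lies in exactly $\binom{N-1}{K-1}$ of the $K$-subsets, together with $\binom{N-1}{K-1}/\binom{N}{K}=K/N$, and the last inequality is subadditivity of entropy.

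It remains to justify the identity $H(\mathcal{A}^{[N]}_{\theta^\prime}|\,\mathcal{W}_\Lambda,\mathcal{Q})=L+H(\mathcal{A}^{[N]}_{\theta^\prime}|\,\mathcal{W}_\Lambda,\mathcal{W}_{\theta^\prime},\mathcal{Q})$. Expand $H(\mathcal{A}^{[N]}_{\theta^\prime},\mathcal{W}_{\theta^\prime}|\,\mathcal{W}_\Lambda,\mathcal{Q})$ by the chain rule in two orders. In one order the $\mathcal{W}_{\theta^\prime}$-term vanishes by the correctness condition (\ref{eq2}) (which still gives $H(\mathcal{W}_{\theta^\prime}|\,\mathcal{A}^{[N]}_{\theta^\prime},\mathcal{W}_\Lambda,\mathcal{Q})=0$ after adding $\mathcal{W}_\Lambda$ to the conditioning); in the other order the $\mathcal{W}_{\theta^\prime}$-term equals $H(\mathcal{W}_{\theta^\prime}|\,\mathcal{W}_\Lambda,\mathcal{Q})=H(\mathcal{W}_{\theta^\prime})=L$, using (\ref{q1}), $\theta^\prime\notin\Lambda$, and the independence of $\mathcal{W}_{[M]}$ from $\mathcal{Q}$ (a consequence of (\ref{eq0}) and the fact that $\mathcal{C}^{[N]}$ and $\mathcal{W}_{[M]}$ are informationally equivalent under MDS encoding). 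Equating the two expansions gives the identity.

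The part requiring the most care is the combinatorial bookkeeping in the averaged estimate: one must be sure the double count over $K$-subsets produces precisely the factor $K/N$, and that every entropy throughout is conditioned on the full query bundle $\mathcal{Q}$ so that Lemmas \ref{lem2}--\ref{lem3} and the correctness/independence facts apply verbatim rather than up to extra mutual-information terms. (The hypothesis $\theta\in\Lambda$ is not needed for the inequalities above; it simply fixes the regime in which this recursion is later iterated.)
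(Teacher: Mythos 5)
Your proposal is correct and follows essentially the same route as the paper: average the bound $H(\mathcal{A}^{[N]}_{\theta}|\mathcal{W}_\Lambda,\mathcal{Q})\geq H(\mathcal{A}^{\Gamma}_{\theta}|\mathcal{W}_\Lambda,\mathcal{Q})$ over $K$-subsets, swap $\theta$ for $\theta'$ via Lemmas \ref{lem2}--\ref{lem3} (the paper packages this as (\ref{eqqq})), pass to $\frac{K}{N}H(\mathcal{A}^{[N]}_{\theta'}|\mathcal{W}_\Lambda,\mathcal{Q})$, and extract the $L$ term by the chain rule together with correctness and the independence assumptions. The only cosmetic difference is that the paper invokes Han's inequality for the step you obtain by double-counting plus subadditivity, and your observation that $\theta\in\Lambda$ is not actually needed is also accurate.
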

\begin{proof} Since
\begin{equation}\label{eq7}
 H(\mathcal{A}^{[N]}_{\theta}| \mathcal{W}_\Lambda,\mathcal{Q})\geq H(\mathcal{A}^{\Gamma}_{\theta}| \mathcal{W}_\Lambda,\mathcal{Q}), \end{equation} for any $\Gamma \subseteq[N]$ with  $|\Gamma|=K$, then
\begin{equation}\label{eq8}
\begin{split}
 H(\mathcal{A}^{[N]}_{\theta}| \mathcal{W}_\Lambda,\mathcal{Q})
 \geq& \frac{1}{\binom{N}{K}}\sum_{\Gamma: \Gamma \subseteq[N],|\Gamma|=K}H(\mathcal{A}^{\Gamma}_{\theta}| \mathcal{W}_\Lambda,\mathcal{Q})\\
 \stackrel{(a)}{=}&\frac{1}{\binom{N}{K}}\sum_{\Gamma: \Gamma \subseteq[N],|\Gamma|=K}H(\mathcal{A}^{\Gamma}_{\theta^\prime}|\mathcal{W}_\Lambda,\mathcal{Q}) \\
 \stackrel{(b)}{\geq}&\frac{K}{N}H(\mathcal{A}^{[N]}_{\theta^\prime}| \mathcal{W}_\Lambda,\mathcal{Q})\\
 =&\frac{K}{N}\big(H(\mathcal{A}^{[N]}_{\theta^\prime}, \mathcal{W}_{\theta^\prime}| \mathcal{W}_\Lambda,\mathcal{Q})-H(\mathcal{W}_{\theta^\prime}|\mathcal{A}^{[N]}_{\theta^\prime},\mathcal{W}_\Lambda,\mathcal{Q})\big)\notag\\
 \stackrel{(c)}{=}&\frac{K}{N}\big(H(\mathcal{W}_{\theta^\prime}| \mathcal{W}_\Lambda,\mathcal{Q})+ H(\mathcal{A}^{[N]}_{\theta^\prime}| \mathcal{W}_\Lambda,\mathcal{W}_{\theta^\prime},\mathcal{Q})\big) \notag\\
 \stackrel{(d)}{=}&\frac{KL}{N} +\frac{K}{N}H(\mathcal{A}^{[N]}_{\theta^\prime}| \mathcal{W}_\Lambda,\mathcal{W}_{\theta^\prime},\mathcal{Q}),
 \end{split}
 \end{equation}
 where  {\small$(a)$} follows from (\ref{eqqq}), the inequality {\small$(b)$} comes from the Han's inequality,  {\small$(c)$} is due to the fact that  $H(\mathcal{W}_{\theta^\prime}|\mathcal{A}^{[N]}_{\theta^\prime},\mathcal{W}_\Lambda,\mathcal{Q})=0$, and {\small$(d)$} comes from  the assumptions (\ref{q1}) and (\ref{eq0}).
\end{proof}

The next theorem characterizes the capacity of $[N,K]$ MDS coded PIR in the non-colluding case (i.e. $T=1$). The theorem has been proved in \cite{Bana&Uluk16:CapacityPIRCoded}. Here, we reprove the theorem  to derive some key equalities for later use.

\begin{theorem}\label{thm1} For coded PIR with $M$ records and $N$ coded servers, the capacity is
\begin{align*}
\mathcal{C}_{\mbox{\tiny C-PIR}}=(1+\frac{K}{N}+\frac{K^2}{N^2}+\dots+\frac{K^{M-1}}{N^{M-1}})^{-1}.
\end{align*}
\end{theorem}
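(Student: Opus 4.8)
The plan is to establish matching upper and lower bounds on the PIR rate. For the \emph{converse} (rate $\leq \mathcal{C}_{\mbox{\tiny C-PIR}}$), I would iteratively apply Lemma~\ref{lem4} starting from the full download. Concretely, set $\theta=1$ and $\Lambda=\emptyset$ in Lemma~\ref{lem4}, then feed its conclusion back into itself with progressively larger $\Lambda$: the term $H(\mathcal{A}^{[N]}_{\theta'}\mid\mathcal{W}_\Lambda,\mathcal{W}_{\theta'},\mathcal{Q})$ has exactly the form needed to reapply the lemma with $\Lambda'=\Lambda\cup\{\theta'\}$. After $M$ rounds one obtains
\[
H(\mathcal{A}^{[N]}_1\mid\mathcal{Q})\;\geq\;\frac{KL}{N}+\frac{K^2L}{N^2}+\cdots+\frac{K^{M-1}L}{N^{M-1}}+\frac{K^{M-1}}{N^{M-1}}H(\mathcal{A}^{[N]}_{\theta'}\mid\mathcal{W}_{[M]},\mathcal{Q}),
\]
and the last term is nonnegative (indeed zero by Lemma~\ref{lem2} applied with $\Lambda=[M]$, but nonnegativity suffices here). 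Since $D=\sum_i H(\mathcal{A}^{(i)}_1)\geq H(\mathcal{A}^{[N]}_1)\geq H(\mathcal{A}^{[N]}_1\mid\mathcal{Q})$ (using independence from $\mathcal{Q}$ up to conditioning and $H(\mathcal{A}^{[N]}_1\mid\mathcal{Q})\le H(\mathcal{A}^{[N]}_1)$, both directions handled via (\ref{eq0})), and also $D\geq H(\mathcal{A}^{[N]}_1)\geq L$ to supply the leading ``$1$'' (from correctness, $H(\mathcal{A}^{[N]}_1\mid\mathcal{Q})\geq H(\mathcal{W}_1\mid\mathcal{Q})=L$), combining gives $D\geq L(1+\frac{K}{N}+\cdots+\frac{K^{M-1}}{N^{M-1}})$, hence $\mathcal{R}=L/D\leq\mathcal{C}_{\mbox{\tiny C-PIR}}$.

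For the \emph{achievability}, I would cite the explicit scheme of Banawan and Ulukus~\cite{Bana&Uluk16:CapacityPIRCoded} (with sub-packetization $L=KN^M$), which attains rate exactly $(1+\frac{K}{N}+\cdots+\frac{K^{M-1}}{N^{M-1}})^{-1}$, so no new construction is needed at this point of the paper; the optimal-sub-packetization scheme is deferred to Section~\ref{sec5}. The supremum in Definition~\ref{def0} is therefore met, establishing the theorem.

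Since the statement says the reproof is intended ``to derive some key equalities for later use,'' I expect the proof to be written so that each inequality in the chain is tracked as an equality condition: the scheme being capacity-achieving forces (\ref{eq7}), Han's inequality step $(b)$, and the correctness/entropy bounds all to hold with equality. In particular, I anticipate recording that for a capacity-achieving linear scheme, $H(\mathcal{A}^{(i)}_\theta\mid\mathcal{Q})=\frac{L}{N}(1+\frac{K}{N}+\cdots)^{-1}\cdot(\text{something})$ is the same for every server $i$, that $H(\mathcal{A}^{[N]}_\theta\mid\mathcal{W}_\Lambda,\mathcal{W}_{\theta'},\mathcal{Q})$ collapses to the next partial sum, and that the final ``junk'' term $H(\mathcal{A}^{[N]}_{\theta'}\mid\mathcal{W}_{[M]},\mathcal{Q})$ vanishes. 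These equalities are the real payload of the section and are what Sections~\ref{sec3}--\ref{sec4} will lean on.

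The main obstacle I foresee is purely bookkeeping: setting up the iteration cleanly so that the index $\theta'$ chosen at each stage (an element outside the current $\Lambda$) is legitimate for all $M$ rounds, and carefully justifying the exchange between $H(\mathcal{A}^{[N]}_\theta)$, $H(\mathcal{A}^{[N]}_\theta\mid\mathcal{Q})$, and $\sum_i H(\mathcal{A}^{(i)}_\theta)$ using (\ref{eq0}), (\ref{eq1}) and subadditivity without circular reasoning. The information-theoretic content is entirely supplied by Lemmas~\ref{lem2}--\ref{lem4}; the theorem is their telescoping consequence.
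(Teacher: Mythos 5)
Your overall strategy---converse by telescoping Lemma~\ref{lem4}, achievability by citing the Banawan--Ulukus scheme---is the paper's, but there is a genuine gap in how the leading term of the geometric sum enters. You start the recursion at $\Lambda=\emptyset$, which Lemma~\ref{lem4} does not license (its hypothesis requires $\theta\in\Lambda$); and even granting it, going from $\Lambda=\emptyset$ to conditioning on $\mathcal{W}_{[M]}$ takes $M$ rounds and yields $H(\mathcal{A}^{[N]}_1\mid\mathcal{Q})\geq\sum_{i=1}^{M}\frac{K^i}{N^i}L$, which is only $\frac{K}{N}$ times the bound you need (your displayed chain, with $M-1$ terms yet conditioning on all $M$ records, is internally inconsistent on this count). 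You then try to ``supply the leading $1$'' from the separate bound $D\geq L$. That step fails: two lower bounds on the same quantity $D$ give their maximum, not their sum, so from $D\geq L$ and $D\geq\sum_{i\geq1}\frac{K^i}{N^i}L$ you cannot conclude $D\geq L\bigl(1+\frac{K}{N}+\cdots+\frac{K^{M-1}}{N^{M-1}}\bigr)$.

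The paper's fix is to make the leading $L$ appear inside a single chain via the \emph{identity}
\begin{equation*}
L=H(\mathcal{W}_\theta)=H(\mathcal{A}_\theta^{[N]}\mid\mathcal{Q})-H(\mathcal{A}_\theta^{[N]}\mid\mathcal{W}_\theta,\mathcal{Q}),
\end{equation*}
which follows from (\ref{eq0}) and correctness (\ref{eq2}); rearranged, $H(\mathcal{A}_\theta^{[N]}\mid\mathcal{Q})=L+H(\mathcal{A}_\theta^{[N]}\mid\mathcal{W}_\theta,\mathcal{Q})$, and Lemma~\ref{lem4} is then iterated on the second summand starting from $\Lambda=\{\theta\}$. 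Now $M-1$ rounds reach $\mathcal{W}_{[M]}$, the residual term vanishes by (\ref{eq1}), and the terms $1,\frac{K}{N},\dots,\frac{K^{M-1}}{N^{M-1}}$ accumulate additively rather than as incomparable bounds. Your remaining observations are sound: the chain $D\geq H(\mathcal{A}^{[N]}_\theta)\geq H(\mathcal{A}^{[N]}_\theta\mid\mathcal{Q})$ is exactly (\ref{eq12}), achievability is indeed delegated to \cite{Bana&Uluk16:CapacityPIRCoded}, and the ``key equalities'' obtained by forcing each inequality tight for capacity-achieving schemes are recorded later as Lemmas~\ref{lem5} and~\ref{lem6} rather than inside this proof.
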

\begin{proof} Based on the general capacity-achieving coded PIR scheme presented in \cite{Bana&Uluk16:CapacityPIRCoded}, it is sufficient to demonstrate that for all coded PIR schemes, the PIR rate is bounded by
$\mathcal{R}\leq (1+\frac{K}{N}+\frac{K^2}{N^2}+\dots+\frac{K^{M-1}}{N^{M-1}})^{-1}$.

 For any $\theta\in[M]$, we prove
 \begin{equation}\label{eq9}
 H(\mathcal{A}_\theta^{[N]}|\mathcal{Q})\geq \sum^{M-1}_{i=1}\frac{K^i}{N^i} L\;.
 \end{equation}

 First, we have  \begin{equation}\label{eq10}
\begin{split}
 L=H(\mathcal{W}_\theta)
 \stackrel{(a)}{=}H(\mathcal{W}_\theta|\mathcal{Q})-
 H(\mathcal{W}_\theta|\mathcal{A}_\theta^{[N]},\mathcal{Q})
 =H(\mathcal{A}_\theta^{[N]}|\mathcal{Q})-H(\mathcal{A}_\theta^{[N]}|\mathcal{W}_\theta,\mathcal{Q}), \end{split}
\end{equation}
  where {\small$(a)$} comes from (\ref{eq0}) and (\ref{eq2}). Then by Lemma \ref{lem4},
  $H(\mathcal{A}_\theta^{[N]}|\mathcal{W}_\theta,\mathcal{Q})\geq \frac{KL}{N} + \frac{K}{N}(H(\mathcal{A}^{[N]}_{\theta^\prime}| \mathcal{W}_{\theta},\mathcal{W}_{\theta^\prime},\mathcal{Q})$.
 By recursively using Lemma \ref{lem4}, we have
\begin{equation}\label{eq11}
\begin{split}
H(\mathcal{A}_\theta^{[N]}|\mathcal{W}_1,\mathcal{Q})&\geq \sum^{M-1}_{i=1} \frac{K^i}{N^i}L+
\frac{K^{M-1}}{N^{M-1}} H(\mathcal{A}_{\theta^{\prime\prime}}^{[N]}|\mathcal{W}_{[M]},\mathcal{Q})
\stackrel{(a)}{=}\sum^{M-1}_{i=1} \frac{K^i}{N^i} L,
\end{split}
\end{equation}
where {\small$(a)$} comes from (\ref{eq1}). Combining with (\ref{eq10}) and (\ref{eq11}), we immediately obtain (\ref{eq9}).

Finally, for any coded PIR scheme, we know that its rate
\begin{equation}\label{eq12}\mathcal{R}=\frac{H(\mathcal{W}_\theta)}{\sum^N_{i=1}H(\mathcal{A}_\theta^{(i)})}\leq \frac{L}{H(\mathcal{A}_\theta^{[N]})}\leq \frac{L}{H(\mathcal{A}_\theta^{[N]}|\mathcal{Q})}.\end{equation}
Combining with (\ref{eq9}),
 $\mathcal{R}\leq (1+\frac{K}{N}+\frac{K^2}{N^2}+\dots+\frac{K^{M-1}}{N^{M-1}})^{-1}$.
\end{proof}

\section{The Lower Bound on Sub-Packetization}\label{sec3}
In this section, we derive a lower bound on the sub-packetization for all linear capacity-achieving coded PIR schemes. Namely,
\begin{theorem}\label{thm2} Suppose $M\geq 2, N\!>\!K\!\geq\!1$. Then any linear capacity-achieving $[N,K]$ MDS coded PIR scheme has sub-packetization
$L\geq Kn^{M-1}$ where $d={\rm gcd}(N,K),n=N/d$.
\end{theorem}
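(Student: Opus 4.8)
The plan is to exploit the equality conditions forced by capacity-achievement in the chain of inequalities from Theorem~\ref{thm1}, and then translate these entropy equalities into rank constraints on the linear query matrices $Q_{\theta,j}^{(i)}$. Concretely, for a capacity-achieving scheme we must have equality throughout \eqref{eq12}, \eqref{eq9}, and in every application of Lemma~\ref{lem4} (hence in steps $(a)$--$(d)$ of \eqref{eq8} and in \eqref{eq7}). In particular, equality in \eqref{eq7} forces $H(\mathcal{A}^{[N]}_\theta\mid \mathcal{W}_\Lambda,\mathcal{Q}) = H(\mathcal{A}^{\Gamma}_\theta\mid \mathcal{W}_\Lambda,\mathcal{Q})$ for \emph{every} $K$-subset $\Gamma$, and equality in Han's inequality $(b)$ forces the answers $\mathcal{A}^{(i)}_\theta$ (conditioned on $\mathcal{W}_\Lambda,\mathcal{Q}$) to behave additively across all servers, not just within a fixed $K$-set. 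First I would record these equalities precisely, for each nested subset $\emptyset = \Lambda_0 \subsetneq \Lambda_1 \subsetneq \cdots \subsetneq \Lambda_M = [M]$ arising in the recursion, obtaining for each $\ell$ that $H(\mathcal{A}^{(i)}_\theta \mid \mathcal{W}_{\Lambda_\ell},\mathcal{Q})$ equals a fixed value $\frac{1}{N}\cdot\frac{K^{\ell}}{N^{\ell-1}}\cdot\frac{L}{1}$-type expression independent of $i$ and $\theta$, and that conditioning is "balanced" across servers.

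Next I would pass from entropies to linear algebra. Since the scheme is linear, $\mathcal{A}^{(i)}_\theta = \sum_j \mathcal{C}^{(i)}_j Q^{(i)}_{\theta,j}$, and conditioning on $\mathcal{W}_\Lambda$ amounts to deleting the contribution of the records indexed by $\Lambda$; because the $\mathcal{W}_j$ are independent and uniform, all the conditional entropies above become ranks (over $\mathbb{F}_q$) of appropriate submatrices built from the $Q^{(i)}_{\theta,j}$ with $j\notin\Lambda$, weighted by the MDS structure of $G$. The key quantitative consequence I expect: writing $\rho^{(i)}_{\theta,\ell} := \operatorname{rank}$ of the relevant block at level $\ell$, the equality in Lemma~\ref{lem4} gives a recursion $\rho_\ell = \frac{K}{N}\tilde L + \frac{K}{N}\rho_{\ell+1}$ (summed appropriately over servers), and the "$\Gamma$-independence" equality forces each server to contribute the \emph{same} rank, so that $N \mid$ certain integer combinations. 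Unwinding the recursion from $\rho_M = 0$ upward yields that $\tilde L$ must be divisible by $n^{M-1}$ where $n = N/\gcd(N,K)$: at each of the $M-1$ steps the factor $\frac{K}{N}$ must produce an integer after multiplication by an integer rank, and the minimal such $\tilde L$ is $n^{M-1}$. Hence $L = K\tilde L \geq K n^{M-1}$.

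The main obstacle I anticipate is the passage from the conditional-entropy equalities to clean rank/divisibility statements, specifically handling the interaction between the MDS code structure and the query matrices. One has to argue that equality in Han's inequality across \emph{all} $K$-subsets $\Gamma$ (not merely one) forces the per-server quantities $H(\mathcal{A}^{(i)}_\theta\mid \mathcal{W}_{\Lambda_\ell},\mathcal{Q})$ to be equal integers — this symmetry/averaging argument, combined with $\gcd(N,K)=d$, is what manufactures the factor $n = N/d$ rather than merely $N$. A secondary delicate point is ensuring the recursion is applied to the \emph{same} nested family of record subsets that appears in the proof of Theorem~\ref{thm1}, so that the telescoping $\prod (K/N)$ is legitimate and the boundary condition $H(\mathcal{A}^{[N]}_{\theta}\mid \mathcal{W}_{[M]},\mathcal{Q}) = 0$ from \eqref{eq1} can be invoked. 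Once the divisibility $n^{M-1}\mid \tilde L$ is established, the conclusion $L \geq K n^{M-1}$ is immediate, and matching it against the scheme in Section~\ref{sec5} shows tightness.
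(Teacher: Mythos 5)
Your overall architecture matches the paper's: force equality in \eqref{eq7}, \eqref{eq8}, \eqref{eq12}, convert the resulting conditional entropies of a linear scheme into ranks of the matrices $\tilde{Q}^{\Gamma}_{\theta,\Lambda}={\bf g}_i\otimes Q^{(i)}_{\theta,j}$, use the $\Gamma$-symmetry to equalize per-server ranks, and extract divisibility. But the decisive arithmetic step is missing, and the divisibility facts you actually name do not suffice. From the capacity ratio $L/D=n^{M-1}/\sum_{i=0}^{M-1}n^{M-1-i}k^i$ (in lowest terms, by the identity $\gcd(n^{m},\sum_{i=0}^{m}n^{m-i}k^{i})=\gcd(n,k)^{m}=1$) you get $n^{M-1}\mid L$; from $K\mid L$ (the storage assumption) and your ``$N\mid$ certain integer combinations'' you get $N\mid L$. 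These three together give only $\operatorname{lcm}(N,K,n^{M-1})\mid L$, which is strictly weaker than the claimed $Kn^{M-1}\mid L$ whenever $\gcd(d,n^{M-2})>1$: for $(M,N,K)=(3,4,2)$ one has $n=2$, $\operatorname{lcm}(4,2,4)=4$, while the theorem requires $L\geq Kn^{2}=8$. So your asserted conclusion ``$n^{M-1}\mid\tilde L$'' does not follow from the constraints you list.

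The missing ingredient is a \emph{second} per-server rank equality, applied to the interference block rather than the desired block: equality in Lemma~\ref{lem3} across all $K$-subsets forces $\operatorname{rank}(\tilde{Q}^{(i)}_{\theta,\bar\theta})$ to be the same for all $i$, and since $\operatorname{rank}(\tilde{Q}^{\Gamma}_{\theta,\bar\theta})=D-L$ for $|\Gamma|=K$, this yields $K\mid D-L$. The paper then writes $L=\mu\,\tfrac{dk}{\gcd(d,n^{M-1})}n^{M-1}$, translates $N\mid L$ and $K\mid D-L$ into $\gcd(d,n^{M-1})\mid \mu k n^{M-2}$ and $\gcd(d,n^{M-1})\mid \mu k\sum_{i=0}^{M-2}n^{M-2-i}k^i$, and uses the gcd identity again (plus $\gcd(k,d)=1$) to conclude $\gcd(d,n^{M-1})\mid\mu$, i.e., $L\geq Kn^{M-1}$. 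Your ``at each of the $M-1$ steps the factor $K/N$ must produce an integer'' gestures at a per-level variant of this (and a per-level $K$-divisibility of each $\rho_\ell$ with $\theta\in\Lambda_\ell$ would indeed work), but you never isolate the $K$-divisibility constraint — only the $N$-one — so as written the argument cannot close. A smaller inaccuracy: equality in Han's inequality does not make the answers additive across all $N$ servers after conditioning on $\mathcal{W}_\Lambda$ with $\theta\in\Lambda$; additivity holds only within $K$-subsets (indeed $H(\mathcal{A}^{[N]}_\theta\mid\mathcal{W}_\Lambda,\mathcal{Q})=Kh_\ell<Nh_\ell$ there), which is exactly why the divisor $K$, and not $N$, appears in the second constraint.
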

In proving the lower bound, we derive some identities of capacity-achieving coded PIR schemes and then some properties of linear capacity-achieving coded PIR schemes in Section \ref{sec3a} and Section \ref{sec3b}, respectively. Finally, the proof of Theorem \ref{thm2} is presented in Section \ref{sec3c}.

\subsection{Some identities for capacity-achieving PIR schemes}\label{sec3a}

\begin{lemma}\label{lem5} Consider capacity-achieving coded PIR schemes. For any $\theta\in[M]$, denote $\bar{\theta}=[M]-\theta$, then
\begin{equation}\label{eq13}
H(\mathcal{A}_\theta^{[N]}|\mathcal{Q})=\sum_{i=1}^{N}H(\mathcal{A}_\theta^{(i)})=D\;,
\end{equation}
\begin{equation}\label{eq14}
H(\mathcal{A}_\theta^{[N]}|\mathcal{W}_\theta,\mathcal{Q})=D-L\;.
\end{equation}
\begin{equation}\label{qq}
H(\mathcal{A}_\theta^{[N]}|\mathcal{W}_{\bar{\theta}},\mathcal{Q})=L\;.
\end{equation}

\end{lemma}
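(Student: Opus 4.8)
The plan is to exploit the hypothesis of capacity-achievement to force every inequality used in the proof of Theorem~\ref{thm1} to hold with equality, and then to read the first two identities directly off these collapsed inequalities; the third identity I would obtain from correctness together with the mutual independence of the records and the queries, and in fact without needing capacity-achievement at all.

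First I would record that capacity-achievement means $\mathcal{R}=L/D=\mathcal{C}_{\mbox{\tiny C-PIR}}$, so $D=\big(1+\frac{K}{N}+\cdots+\frac{K^{M-1}}{N^{M-1}}\big)L$. For (\ref{eq13}) I would chain $D=\sum_{i=1}^{N}H(\mathcal{A}^{(i)}_\theta)\ge\sum_{i=1}^{N}H(\mathcal{A}^{(i)}_\theta\mid\mathcal{Q})\ge H(\mathcal{A}^{[N]}_\theta\mid\mathcal{Q})$, using that conditioning does not increase entropy and that entropy is subadditive. Going the other way, (\ref{eq10}) gives $H(\mathcal{A}^{[N]}_\theta\mid\mathcal{Q})=L+H(\mathcal{A}^{[N]}_\theta\mid\mathcal{W}_\theta,\mathcal{Q})$, while the recursive application of Lemma~\ref{lem4} carried out in the proof of Theorem~\ref{thm1} (cf.\ (\ref{eq11})) bounds $H(\mathcal{A}^{[N]}_\theta\mid\mathcal{W}_\theta,\mathcal{Q})\ge\sum_{i=1}^{M-1}\frac{K^i}{N^i}L$, so $H(\mathcal{A}^{[N]}_\theta\mid\mathcal{Q})\ge D$. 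Hence the whole chain consists of equalities, yielding $H(\mathcal{A}^{[N]}_\theta\mid\mathcal{Q})=\sum_{i=1}^{N}H(\mathcal{A}^{(i)}_\theta)=D$, which is (\ref{eq13}). Identity (\ref{eq14}) is then immediate: substituting (\ref{eq13}) into (\ref{eq10}) gives $H(\mathcal{A}^{[N]}_\theta\mid\mathcal{W}_\theta,\mathcal{Q})=D-L$.

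For (\ref{qq}) I would expand the joint entropy $H(\mathcal{A}^{[N]}_\theta,\mathcal{W}_\theta\mid\mathcal{W}_{\bar{\theta}},\mathcal{Q})$ by the chain rule in the two possible orders. Putting $\mathcal{A}^{[N]}_\theta$ first, the residual term $H(\mathcal{W}_\theta\mid\mathcal{A}^{[N]}_\theta,\mathcal{W}_{\bar{\theta}},\mathcal{Q})$ vanishes by correctness (\ref{eq2}), so the quantity equals $H(\mathcal{A}^{[N]}_\theta\mid\mathcal{W}_{\bar{\theta}},\mathcal{Q})$. Putting $\mathcal{W}_\theta$ first, the residual term $H(\mathcal{A}^{[N]}_\theta\mid\mathcal{W}_{[M]},\mathcal{Q})$ vanishes because each $\mathcal{C}^{(i)}$ is a deterministic function of $\mathcal{W}_{[M]}$, whence $H(\mathcal{A}^{(i)}_\theta\mid\mathcal{W}_{[M]},\mathcal{Q})=0$ by (\ref{eq1}); so the quantity equals $H(\mathcal{W}_\theta\mid\mathcal{W}_{\bar{\theta}},\mathcal{Q})$. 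Finally $H(\mathcal{W}_\theta\mid\mathcal{W}_{\bar{\theta}},\mathcal{Q})=H(\mathcal{W}_\theta)=L$, using the mutual independence of the records (\ref{q1}) and the independence of $\mathcal{W}_{[M]}$ from $\mathcal{Q}$; the latter holds because $\mathcal{W}_{[M]}$ and $\mathcal{C}^{[N]}$ determine each other under the $[N,K]$ MDS encoding (as $N\ge K$), so $I(\mathcal{W}_{[M]};\mathcal{Q})=I(\mathcal{C}^{[N]};\mathcal{Q})=0$ by (\ref{eq0}). Comparing the two expansions gives $H(\mathcal{A}^{[N]}_\theta\mid\mathcal{W}_{\bar{\theta}},\mathcal{Q})=L$.

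The argument is essentially bookkeeping with entropy (in)equalities, so I do not expect a deep obstacle; the care-points are (i) aligning the recursive lower bound from the proof of Theorem~\ref{thm1} precisely enough that the squeeze $D\ge H(\mathcal{A}^{[N]}_\theta\mid\mathcal{Q})\ge D$ leaves no slack, and (ii) upgrading the independence statement (\ref{eq0}) about the coded shares to independence of the entire record tuple $\mathcal{W}_{[M]}$ from the query variable $\mathcal{Q}$, which is exactly what collapses $H(\mathcal{W}_\theta\mid\mathcal{W}_{\bar{\theta}},\mathcal{Q})$ to $L$. I would also remark that (\ref{qq}) uses only correctness, not capacity-achievement.
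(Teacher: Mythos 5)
Your proposal is correct and follows essentially the same route as the paper: the first identity comes from collapsing the inequality chain behind the converse of Theorem~\ref{thm1} (the paper phrases this via (\ref{eq12}), you make both sides of the squeeze explicit), the second is the same substitution into (\ref{eq10}), and the third is the paper's mutual-information identity written out as a double chain-rule expansion, with a welcome extra justification of $H(\mathcal{W}_\theta\mid\mathcal{W}_{\bar{\theta}},\mathcal{Q})=L$ from (\ref{eq0}). Your observation that (\ref{qq}) needs only correctness, not capacity-achievement, is also accurate.
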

\begin{proof}From (\ref{eq12})
$\mathcal{R}\!=\!L/D\!\leq\!L/H(\mathcal{A}_\theta^{[N]}|\mathcal{Q})\!\leq (1+\frac{K}{N}+\frac{K^2}{N^2}+\dots+\frac{K^{M-1}}{N^{M-1}})^{-1}
$ for any coded PIR scheme. In particular, for every capacity-achieving coded PIR scheme, $\mathcal{R}=(1+\frac{K}{N}+\frac{K^2}{N^2}+\dots+\frac{K^{M-1}}{N^{M-1}})^{-1}.$ Therefore, we have
$H(\mathcal{A}_\theta^{[N]}|\mathcal{Q})=\sum_{i=1}^{N}H(\mathcal{A}_\theta^{(i)})=D$ for capacity-achieving coded PIR schemes. Combining with (\ref{eq10}) and (\ref{eq13}), we further have
$H(\mathcal{A}_\theta^{[N]}|\mathcal{W}_\theta,\mathcal{Q})=H(\mathcal{A}_\theta^{[N]}|\mathcal{Q})-L=D-L.$
Finally, by the fact $H(\mathcal{A}_\theta^{[N]}|\mathcal{W}_{[M]},\mathcal{Q})=0$ and $H(\mathcal{W}_{\theta}|\mathcal{A}_\theta^{[N]},\mathcal{W}_{\bar{\theta}},\mathcal{Q})=0$, we have
$$H(\mathcal{A}_\theta^{[N]}|\mathcal{W}_{\bar{\theta}},\mathcal{Q})=I(\mathcal{A}_\theta^{[N]};\mathcal{W}_{\theta}|\mathcal{W}_{\bar{\theta}},\mathcal{Q})=H(\mathcal{W}_{\theta}|\mathcal{W}_{\bar{\theta}},\mathcal{Q})=L.
$$
\end{proof}

\begin{lemma}\label{lem6} Consider capacity-achieving coded PIR schemes. For any $\theta\in[M]$, any $ \Lambda\subseteq[M]$, and any $\Gamma\subseteq[N]$ with $|\Gamma|=K$,
 \begin{equation}\label{eq15}H(\mathcal{A}_\theta^{\Gamma}|\mathcal{W}_{\Lambda},\mathcal{Q})= \begin{cases}
 H(\mathcal{A}_\theta^{[N]}|\mathcal{W}_{\Lambda},\mathcal{Q}),& if~\theta\in \Lambda\\
  \frac{K}{N}H(A_\theta^{[N]}|\mathcal{W}_{\Lambda},\mathcal{Q}), & if ~\theta\notin\Lambda
\end{cases}\end{equation}
\end{lemma}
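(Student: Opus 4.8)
The plan is to reduce both cases of \eqref{eq15} to the equalities already established in Lemma \ref{lem5} together with the Han-type sum decomposition of Lemma \ref{lem3}. The only inequality we have available going from a $K$-subset to the full set $[N]$ is the averaging bound (b) used in the proof of Lemma \ref{lem4}, namely $\frac{1}{\binom{N}{K}}\sum_{\Gamma}H(\mathcal{A}^{\Gamma}_{\theta}\mid\mathcal{W}_\Lambda,\mathcal{Q})\geq\frac{K}{N}H(\mathcal{A}^{[N]}_{\theta}\mid\mathcal{W}_\Lambda,\mathcal{Q})$; so the strategy is to show that in a capacity-achieving scheme this inequality, and the intermediate inequality \eqref{eq7}, must hold with equality. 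Concretely, I would first re-run the chain \eqref{eq8} but now taking $\Lambda$ arbitrary instead of a singleton, keeping track of where slack could occur: the steps $(a)$, $(c)$, $(d)$ are exact equalities regardless, while \eqref{eq7} and step $(b)$ are the two places where the inequality can be strict. Recursive application (exactly as in \eqref{eq11}) then yields $H(\mathcal{A}_\theta^{[N]}\mid\mathcal{W}_\theta,\mathcal{Q})\geq\sum_{i=1}^{M-1}\frac{K^i}{N^i}L$, and since Lemma \ref{lem5} gives $H(\mathcal{A}_\theta^{[N]}\mid\mathcal{W}_\theta,\mathcal{Q})=D-L=\sum_{i=1}^{M-1}\frac{K^i}{N^i}L$ for a capacity-achieving scheme, every inequality in the chain must in fact be an equality.

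Next I would extract from this equality-everywhere conclusion the two statements we need. When $\theta\notin\Lambda$: pick any $\theta'\in\Lambda$ if $\Lambda\neq\emptyset$ (or handle $\Lambda=\emptyset$ directly via \eqref{eq13}, where the claim is $H(\mathcal{A}_\theta^{\Gamma}\mid\mathcal{Q})=\frac{K}{N}D$, which follows from Lemma \ref{lem3} applied to $H(\mathcal{A}_\theta^{[N]}\mid\mathcal{Q})=D$ together with the $\theta$-independence and $\Gamma$-independence of the averaged conditional entropy). Running the argument of the previous paragraph with the set $\Lambda$ in the role of the conditioning set and $\theta\in[M]-\Lambda$ in the role of the "new" index forces equality in step $(b)$ for the index $\theta$, i.e. $\frac{1}{\binom{N}{K}}\sum_{\Gamma}H(\mathcal{A}^{\Gamma}_{\theta}\mid\mathcal{W}_\Lambda,\mathcal{Q})=\frac{K}{N}H(\mathcal{A}^{[N]}_{\theta}\mid\mathcal{W}_\Lambda,\mathcal{Q})$; combined with \eqref{eqqq} (which makes the summand independent of $\Gamma$) this gives $H(\mathcal{A}_\theta^{\Gamma}\mid\mathcal{W}_\Lambda,\mathcal{Q})=\frac{K}{N}H(\mathcal{A}_\theta^{[N]}\mid\mathcal{W}_\Lambda,\mathcal{Q})$ for every $K$-subset $\Gamma$, which is the second case. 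For the case $\theta\in\Lambda$: here forcing equality in \eqref{eq7} for the index $\theta$ at conditioning set $\Lambda$ gives $H(\mathcal{A}_\theta^{[N]}\mid\mathcal{W}_\Lambda,\mathcal{Q})=H(\mathcal{A}_\theta^{\Gamma}\mid\mathcal{W}_\Lambda,\mathcal{Q})$ for a single $\Gamma$; but by the non-negativity of $H(\mathcal{A}_\theta^{(i)}\mid\mathcal{W}_\Lambda,\mathcal{Q})$ and the sub-additivity $H(\mathcal{A}_\theta^{[N]}\mid\mathcal{W}_\Lambda,\mathcal{Q})\leq\sum_{i\in[N]}H(\mathcal{A}_\theta^{(i)}\mid\mathcal{W}_\Lambda,\mathcal{Q})$ together with Lemma \ref{lem3}, equality for one $\Gamma$ propagates: each $H(\mathcal{A}_\theta^{(i)}\mid\mathcal{W}_\Lambda,\mathcal{Q})$ for $i\notin\Gamma$ must vanish, hence $H(\mathcal{A}_\theta^{\Gamma'}\mid\mathcal{W}_\Lambda,\mathcal{Q})=H(\mathcal{A}_\theta^{[N]}\mid\mathcal{W}_\Lambda,\mathcal{Q})$ for every $K$-subset $\Gamma'$ as well.

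The main obstacle I anticipate is the bookkeeping needed to guarantee that the recursion "closes up" for an \emph{arbitrary} conditioning set $\Lambda$ rather than the nested chain $\mathcal{W}_1,\mathcal{W}_{1},\mathcal{W}_{2},\dots$ used in \eqref{eq11}: one has to check that starting the recursion from $H(\mathcal{A}_\theta^{[N]}\mid\mathcal{W}_\Lambda,\mathcal{Q})$ and peeling off the indices of $[M]-\Lambda$ one at a time still lands on $0$ via \eqref{eq1}, and that the total contribution $\sum\frac{K^i}{N^i}L$ over the remaining $|[M]-\Lambda|$ steps matches what the capacity bound permits — equivalently, that no capacity-achieving scheme can afford slack at \emph{any} intermediate conditioning set, not merely at $\Lambda=\{\theta\}$. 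This is really the statement that the capacity bound is "tight at every level of the recursion simultaneously," and it is the crux that makes all the individual equalities \eqref{eq13}–\eqref{qq} and \eqref{eq15} fall out; once that is nailed down, the extraction of the two cases is routine as sketched above.
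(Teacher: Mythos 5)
Your proposal is correct and takes essentially the same approach as the paper: the paper's proof consists precisely of observing that for a capacity-achieving scheme the inequality (\ref{eq7}) and the Han step in (\ref{eq8}) must hold with equality (your ``no slack anywhere in the converse chain'' argument, which the paper compresses into one sentence), then reading off the case $\theta\in\Lambda$ from equality in (\ref{eq7}) and the case $\theta\notin\Lambda$ by choosing $\theta'\in\Lambda$ and chaining equality in the Han step with (\ref{eq16}) and (\ref{eqqq}). One small correction: (\ref{eqqq}) by itself gives $\theta$-independence of the summand rather than $\Gamma$-independence; the latter follows only after combining it with your already-established $\theta\in\Lambda$ case, which is exactly the combination the paper uses (and you do have both ingredients, including the $\Lambda=\emptyset$ edge case that the paper's own proof silently skips).
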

\begin{proof}
For capacity-achieving coded PIR schemes, (\ref{eq7}) and (\ref{eq8}) both hold with equalities, i.e., for any $\theta \in \Lambda$,
 \begin{equation}\label{eq16}
 H(\mathcal{A}_\theta^{[N]}|\mathcal{W}_{\Lambda},\mathcal{Q})=H(\mathcal{A}_\theta^{\Gamma}|\mathcal{W}_{\Lambda},\mathcal{Q})
 \end{equation}
 while for any $\theta \in \Lambda,~\theta^\prime \in[M]-\Lambda$,
$H(\mathcal{A}_\theta^{[N]}|\mathcal{W}_{\Lambda},\mathcal{Q})=
 \frac{K}{N}H(\mathcal{A}_{\theta^\prime}^{[N]}|\mathcal{W}_{\Lambda},\mathcal{Q}).$

Thus we are left to prove the Lemma for the case $\theta\not\in\Lambda$.
Arbitrarily choose $\theta^\prime \in\Lambda$, then for any $\theta\not\in\Lambda$,
$$\frac{K}{N}H(\mathcal{A}_\theta^{[N]}|\mathcal{W}_{\Lambda},\mathcal{Q})
=H(\mathcal{A}_{\theta^\prime}^{[N]}|\mathcal{W}_{\Lambda},\mathcal{Q})\stackrel{(a)}{=}H(\mathcal{A}_{\theta^\prime}^{\Gamma}|\mathcal{W}_{\Lambda},\mathcal{Q}) \stackrel{(b)}{=}H(\mathcal{A}_{\theta}^{\Gamma}|\mathcal{W}_{\Lambda},\mathcal{Q}),$$
where {\small$(a)$} come from (\ref{eq16}) and  {\small$(b)$} come from (\ref{eqqq}).
\end{proof}

\subsection{Properties of linear capacity-achieving coded PIR schemes}\label{sec3b}
We first define a vectorization operator ${\rm Vec}$, which maps a matrix $A\in\mathbb{F}^{m\times n}$ to a row vector ${\rm Vec}(A)\in\mathbb{F}^{mn}$ whose entries are successively drawn from the matrix row by row. For example, suppose {\small $A=\begin{pmatrix}1&2&0\\2&0&1\end{pmatrix}$,} then {\small ${\rm Vec}(A)=\begin{pmatrix}1&2&0&2&0&1\end{pmatrix}$.}
 The proof of the following proposition is not difficult and so we omit it here.
\begin{proposition}\label{pro1} Suppose $A\in\mathbb{F}^{m\times s},B\in\mathbb{F}^{t\times n},Z\in\mathbb{F}^{s\times t},Y\in\mathbb{F}^{m\times n}$ and $AZB=Y$, then
$
{\rm Vec}(Y)={\rm Vec}(Z)(A^{\tau}\otimes B).
$ Moreover, suppose $A_1,A_2\in\mathbb{F}^{m\times n}$, $k_1,k_2\in\mathbb{F}$, then
$
{\rm Vec}(k_1A_1+k_2A_2)=k_1{\rm Vec}(A_1)+ k_2{\rm Vec}(A_2).
$
\end{proposition}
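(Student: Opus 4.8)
The plan is to handle the two assertions separately, starting with the linearity claim since it will be reused in the argument for the Kronecker identity. The statement ${\rm Vec}(k_1A_1+k_2A_2)=k_1{\rm Vec}(A_1)+k_2{\rm Vec}(A_2)$ is immediate from the definition: ${\rm Vec}$ is merely a re-indexing of the $mn$ entries of an $m\times n$ matrix into a fixed linear order, namely the $\big((p-1)n+q\big)$-th coordinate of ${\rm Vec}(M)$ equals $M_{pq}$ for $p\in[m]$, $q\in[n]$. Since matrix addition and scalar multiplication act entrywise, evaluating this at $M=k_1A_1+k_2A_2$ gives the claim coordinate by coordinate.

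For the identity ${\rm Vec}(Y)={\rm Vec}(Z)(A^{\tau}\otimes B)$ with $Y=AZB$, I would first verify it on rank-one ``basis'' matrices and then extend by linearity. For $k\in[s]$ and $l\in[t]$, let $E_{kl}\in\mathbb{F}^{s\times t}$ be the matrix with a single $1$ in position $(k,l)$ and zeros elsewhere. Writing $A=(a_{pk})$ and $B=(b_{lq})$, one has $AE_{kl}B={\bf a}_k{\bf b}_l$, where ${\bf a}_k$ is the $k$-th column of $A$ and ${\bf b}_l$ is the $l$-th row of $B$, so the $(p,q)$ entry of $AE_{kl}B$ is $a_{pk}b_{lq}$ and hence the $\big((p-1)n+q\big)$-th coordinate of ${\rm Vec}(AE_{kl}B)$ is $a_{pk}b_{lq}$. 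On the other hand, ${\rm Vec}(E_{kl})$ is the standard unit row vector in $\mathbb{F}^{st}$ supported at position $(k-1)t+l$, so ${\rm Vec}(E_{kl})(A^{\tau}\otimes B)$ equals the $\big((k-1)t+l\big)$-th row of $A^{\tau}\otimes B$, whose entry in column $(p-1)n+q$ is, by the block structure of the Kronecker product, $(A^{\tau})_{kp}B_{lq}=a_{pk}b_{lq}$. Thus the two sides coincide coordinatewise, which proves the identity when $Z=E_{kl}$. The general case follows by writing $Z=\sum_{k=1}^{s}\sum_{l=1}^{t}z_{kl}E_{kl}$, so that $Y=AZB=\sum_{k,l}z_{kl}\,AE_{kl}B$, and then invoking the linearity of ${\rm Vec}(\cdot)$ established above together with the linearity of right multiplication by $A^{\tau}\otimes B$: ${\rm Vec}(Y)=\sum_{k,l}z_{kl}\,{\rm Vec}(AE_{kl}B)=\sum_{k,l}z_{kl}\,{\rm Vec}(E_{kl})(A^{\tau}\otimes B)=\big(\sum_{k,l}z_{kl}\,{\rm Vec}(E_{kl})\big)(A^{\tau}\otimes B)={\rm Vec}(Z)(A^{\tau}\otimes B)$.

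The only delicate point, and it is minor, is the index bookkeeping dictated by the row-major convention for ${\rm Vec}$: one must consistently use the bijection $(p,q)\mapsto(p-1)n+q$ between matrix positions and vector coordinates, and match the $\big((k-1)t+l\big)$-th row and $\big((p-1)n+q\big)$-th column of $A^{\tau}\otimes B$ with the Kronecker block indexed by $\big((k,l),(p,q)\big)$. An alternative route is to observe that ${\rm Vec}(M)={\rm vec}(M^{\tau})^{\tau}$, where ${\rm vec}$ denotes the usual column-stacking operator, and then to apply the classical identity ${\rm vec}(PQR)=(R^{\tau}\otimes P){\rm vec}(Q)$ with $(P,Q,R)=(B^{\tau},Z^{\tau},A^{\tau})$; this only relocates the same bookkeeping into transpose and Kronecker manipulations, so I would prefer the direct coordinatewise verification above.
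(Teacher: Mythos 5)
Your proof is correct. The paper explicitly omits the proof of this proposition as routine, so there is nothing to compare against; your argument — linearity of ${\rm Vec}$ from its entrywise definition, verification of the Kronecker identity on the basis matrices $E_{kl}$ via the index bijections $(k,l)\mapsto(k-1)t+l$ and $(p,q)\mapsto(p-1)n+q$, and extension by bilinearity — is exactly the standard direct verification the authors had in mind, and your index bookkeeping for the row-major convention is consistent throughout.
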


By Proposition \ref{pro1}, we can rewrite the equation (\ref{LPIR}), i.e,
$A^{(i)}_\theta={\rm Vec}(A^{(i)}_\theta)=\sum^M_{j=1}{\rm Vec}(W_j)({\bf g}_i\otimes Q^{(i)}_{\theta,j}).$
Equivalently,
{\small\begin{eqnarray} \label{LPIR1}
  &&(A_\theta^{(1)},A_\theta^{(2)},...,A_\theta^{(N)})
  =({\rm Vec}(W_1),{\rm Vec}(W_2),...,{\rm Vec}(W_M))\cdot
  \begin{pmatrix}
  {\bf g}_1\otimes Q_{\theta,1}^{(1)}&{\bf g}_2 \otimes Q_{\theta,1}^{(2)}&\cdots&{\bf g}_N\otimes Q_{\theta,1}^{(N)}\\
  {\bf g}_1\otimes Q_{\theta,2}^{(1)}&{\bf g}_2\otimes Q_{\theta,2}^{(2)}&\cdots&{\bf g}_N\otimes Q_{\theta,2}^{(N)}\\
  \vdots&\vdots&\vdots&\vdots\\
  {\bf g}_1\otimes Q_{\theta,M}^{(1)}&{\bf g}_2\otimes Q_{\theta,M}^{(2)}&\cdots&{\bf g}_N\otimes Q_{\theta,M}^{(N)}
  \end{pmatrix}\;.
\end{eqnarray}}
That is, we formulate of a general linear coded PIR scheme in (\ref{LPIR1}). In particular,  we represent each record as a row vector by using the function {\rm Vec}, which is convenient for investigating the rank of the matrix.
For any $\Gamma\subseteq[N],\Lambda\subseteq[M]$, define the sub-matrix $\tilde{ Q}^{\Gamma}_{\theta,\Lambda}=({\bf g}_i\otimes Q^{(i)}_{\theta,j})_{j\in\Lambda,i\in\Gamma}.$

Next, we establish a connection between the rank of the sub-matrix $\tilde{ Q}^{\Gamma}_{\theta,\Lambda}$ and some conditional entropy in Lemma \ref{lem7}. Combining with the identities of the entropy obtained in Section III-A, we can then get some characterizations of these sub-matrices in Proposition \ref{pro2}, which will be used to prove the lower bound on sub-packetization in Section III-C.

\begin{lemma}\label{lem7}Consider linear capacity-achieving coded PIR scheme. For any $\theta\in[M]$, and for any nonempty subsets $\Gamma\subseteq[N]$ and $\Lambda\subseteq[M]$,
\begin{equation}\label{q2}
H(\mathcal{A}_\theta^{\Gamma}|\mathcal{W}_{\Lambda},\mathcal{Q})={\rm rank}(\tilde{ Q}^{\Gamma}_{\theta,[M]-\Lambda}).
\end{equation}
\end{lemma}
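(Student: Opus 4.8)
The plan is to unwind the definitions so that the conditional entropy becomes the entropy of a random linear combination of the ``unknown'' records, and then to identify that entropy with the rank of the coefficient matrix $\tilde Q^{\Gamma}_{\theta,[M]-\Lambda}$. Write $\Lambda^c=[M]-\Lambda$. First I would observe that, since the scheme is linear, equation (\ref{LPIR1}) gives $\mathcal{A}_\theta^{\Gamma}=(\mathrm{Vec}(\mathcal{W}_1),\dots,\mathrm{Vec}(\mathcal{W}_M))\cdot \tilde Q^{\Gamma}_{\theta,[M]}$, where $\tilde Q^{\Gamma}_{\theta,[M]}$ stacks the blocks ${\bf g}_i\otimes Q_{\theta,j}^{(i)}$ over $j\in[M]$, $i\in\Gamma$. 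Conditioning on $\mathcal{W}_{\Lambda}$ removes exactly the rows of this matrix indexed by $j\in\Lambda$ (those contributions become known constants given $\mathcal{W}_\Lambda$ and $\mathcal{Q}$), so that, given $(\mathcal{W}_\Lambda,\mathcal{Q})$, the vector $\mathcal{A}_\theta^\Gamma$ equals a known shift plus $(\mathrm{Vec}(\mathcal{W}_j))_{j\in\Lambda^c}\cdot \tilde Q^{\Gamma}_{\theta,\Lambda^c}$. Hence $H(\mathcal{A}_\theta^{\Gamma}\mid \mathcal{W}_\Lambda,\mathcal{Q})=H\!\big((\mathrm{Vec}(\mathcal{W}_j))_{j\in\Lambda^c}\cdot \tilde Q^{\Gamma}_{\theta,\Lambda^c}\mid \mathcal{Q}\big)$.

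Next I would use the fact that, by (\ref{q1}), the symbols of $\mathcal{W}_1,\dots,\mathcal{W}_M$ are independent and uniform over $\mathbb{F}_q$, and by (\ref{eq0}) they are independent of $\mathcal{Q}$; hence given a fixed realization of $\mathcal{Q}$, the row vector $\mathbf{x}:=(\mathrm{Vec}(\mathcal{W}_j))_{j\in\Lambda^c}$ is uniform over $\mathbb{F}_q^{|\Lambda^c|\tilde L K}$. For a fixed matrix $P$ over $\mathbb{F}_q$, the distribution of $\mathbf{x}P$ with $\mathbf{x}$ uniform is uniform over the row space of $P$, so $H(\mathbf{x}P)=\mathrm{rank}(P)$ (entropy in base $q$). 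Applying this with $P=\tilde Q^{\Gamma}_{\theta,\Lambda^c}$ — whose entries are fixed once $\mathcal{Q}$ is fixed, since the $Q_{\theta,j}^{(i)}$ are determined by $\mathcal{Q}$ and the ${\bf g}_i$ are scheme constants — gives $H(\mathcal{A}_\theta^\Gamma\mid \mathcal{W}_\Lambda,\mathcal{Q}=q)=\mathrm{rank}(\tilde Q^{\Gamma}_{\theta,\Lambda^c}\big|_{\mathcal{Q}=q})$ for each realization $q$. Averaging over $\mathcal{Q}$ then yields the claimed identity, provided the rank is the same for (almost) all realizations of $\mathcal{Q}$; more cleanly, one takes $\tilde Q^{\Gamma}_{\theta,\Lambda^c}$ to be the matrix of formal query coefficients, and the statement (\ref{q2}) is understood as an identity of this combinatorial rank, matching how it is used in Proposition \ref{pro2}.

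The main obstacle is the bookkeeping around what ``conditioning on $\mathcal{W}_\Lambda$'' does at the level of the $\mathrm{Vec}$-reformulation: one must argue carefully that given $(\mathcal{W}_\Lambda,\mathcal{Q})$ the only remaining randomness in $\mathcal{A}_\theta^\Gamma$ comes from $(\mathrm{Vec}(\mathcal{W}_j))_{j\in\Lambda^c}$ through the block-rows indexed by $\Lambda^c$, and that this residual randomness is genuinely uniform and independent of the conditioning (which is where (\ref{q1}) and (\ref{eq0}) enter). A secondary subtlety is handling the dependence of the matrix $\tilde Q^{\Gamma}_{\theta,\Lambda^c}$ on the random resource $\mathcal{S}\subseteq\mathcal{Q}$; I would dispatch this either by conditioning on $\mathcal{Q}$ throughout (so the matrix is deterministic) and then noting the capacity-achieving identities force consistency, or simply by declaring, as the paper implicitly does, that $\tilde Q^{\Gamma}_{\theta,\Lambda}$ denotes the realized coefficient matrix and all rank statements are per realization. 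Everything else — the Kronecker-product rewriting, $H(\mathbf{x}P)=\mathrm{rank}(P)$ for uniform $\mathbf{x}$ — is routine linear algebra over $\mathbb{F}_q$.
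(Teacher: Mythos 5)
Your proposal is correct and follows exactly the intended argument: the paper omits the proof of Lemma \ref{lem7}, deferring to the analogous Lemma 8 of \cite{Zhang&Xu17:OptimalSubpacketization}, whose proof is precisely this computation — linearity via (\ref{LPIR1}) reduces the conditional answer to a known shift plus $(\mathrm{Vec}(\mathcal{W}_j))_{j\in[M]-\Lambda}\cdot\tilde Q^{\Gamma}_{\theta,[M]-\Lambda}$, and uniformity/independence of the records (from (\ref{q1}) and (\ref{eq0})) gives $H(\mathbf{x}P)=\mathrm{rank}(P)$ in base-$q$ entropy. Your remark that the identity should be read per realization of $\mathcal{Q}$ (or with the rank interpreted for the realized query matrices) is a fair point that the paper glosses over, and your handling of it is adequate.
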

The proof of this lemma is similar to that of Lemma 8 in \cite{Zhang&Xu17:OptimalSubpacketization} and is omitted here.
\begin{proposition}\label{pro2} For a linear capacity-achieving coded PIR scheme, for any $\theta\in[M]$,
then it holds
\begin{equation}\label{q0}
{\rm rank}(\tilde{Q}^{[N]}_{\theta,\theta})=L\;.
\end{equation}
Moreover, for any $\Gamma\subseteq[N]$ with $|\Gamma|=K$,
it holds
\begin{equation}\label{q3}
{\rm rank}(\tilde{ Q}^{\Gamma}_{\theta,\theta})=\frac{KL}{N},
\end{equation}
\begin{equation}\label{q4}
{\rm rank}(\tilde{ Q}^{\Gamma}_{\theta,\bar{\theta}})=D-L.
\end{equation}
\end{proposition}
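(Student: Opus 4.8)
The plan is to translate each of the three rank identities into a statement about conditional entropy via Lemma \ref{lem7}, and then invoke the entropy identities already established in Lemmas \ref{lem5} and \ref{lem6} for capacity-achieving schemes. Concretely, for \eqref{q0} I would apply Lemma \ref{lem7} with $\Gamma=[N]$ and $\Lambda=\bar\theta=[M]-\theta$, which gives ${\rm rank}(\tilde Q^{[N]}_{\theta,\theta})=H(\mathcal{A}_\theta^{[N]}\mid \mathcal{W}_{\bar\theta},\mathcal{Q})$; by \eqref{qq} the right-hand side equals $L$, proving \eqref{q0}. For \eqref{q3}, I would again take $\Lambda=\bar\theta$ but now $\Gamma\subseteq[N]$ with $|\Gamma|=K$, so Lemma \ref{lem7} yields ${\rm rank}(\tilde Q^{\Gamma}_{\theta,\theta})=H(\mathcal{A}_\theta^{\Gamma}\mid \mathcal{W}_{\bar\theta},\mathcal{Q})$. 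Since $\theta\notin\bar\theta$, the second case of Lemma \ref{lem6} applies and gives $H(\mathcal{A}_\theta^{\Gamma}\mid \mathcal{W}_{\bar\theta},\mathcal{Q})=\frac{K}{N}H(\mathcal{A}_\theta^{[N]}\mid \mathcal{W}_{\bar\theta},\mathcal{Q})=\frac{K}{N}L=\frac{KL}{N}$, again using \eqref{qq}.

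For \eqref{q4} I would apply Lemma \ref{lem7} with $\Lambda=\emptyset$ (so $[M]-\Lambda=[M]$) and $\Gamma\subseteq[N]$ of size $K$, obtaining ${\rm rank}(\tilde Q^{\Gamma}_{\theta,\bar\theta})$ — wait, here one must be careful: Lemma \ref{lem7} expresses ${\rm rank}(\tilde Q^{\Gamma}_{\theta,[M]-\Lambda})$, so to land on the sub-matrix indexed by $\bar\theta$ I should instead take $\Lambda=\theta$, which gives $[M]-\Lambda=\bar\theta$ and hence ${\rm rank}(\tilde Q^{\Gamma}_{\theta,\bar\theta})=H(\mathcal{A}_\theta^{\Gamma}\mid \mathcal{W}_{\theta},\mathcal{Q})$. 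Now $\theta\in\Lambda=\{\theta\}$, so the first case of Lemma \ref{lem6} gives $H(\mathcal{A}_\theta^{\Gamma}\mid \mathcal{W}_{\theta},\mathcal{Q})=H(\mathcal{A}_\theta^{[N]}\mid \mathcal{W}_{\theta},\mathcal{Q})$, and by \eqref{eq14} this equals $D-L$, proving \eqref{q4}.

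The argument is essentially a bookkeeping exercise once Lemma \ref{lem7} and the capacity-achieving identities are in hand, so there is no deep obstacle; the only thing requiring care is matching the index sets correctly — remembering that $\tilde Q^{\Gamma}_{\theta,\Lambda}$ in Lemma \ref{lem7} is paired with conditioning on $\mathcal{W}_{[M]-\Lambda}$, not $\mathcal{W}_{\Lambda}$, so one must condition on the complement of the column-index set one wants. A secondary point worth a sentence in the write-up is that Lemma \ref{lem6} requires the subset appearing in the conditioning to be nonempty for the stated dichotomy to be the relevant one; in all three applications above the conditioning set ($\bar\theta$ in the first two, $\{\theta\}$ in the third) is nonempty because $M\geq 2$, so this causes no trouble.
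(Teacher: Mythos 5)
Your proposal is correct and follows essentially the same route as the paper: translate each rank via Lemma \ref{lem7} into a conditional entropy (conditioning on the complement of the column-index set), then evaluate using \eqref{qq}, Lemma \ref{lem6}, and \eqref{eq14}. The index-set bookkeeping you flag, including the nonemptiness of the conditioning set needed for Lemma \ref{lem6}, is handled correctly.
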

\begin{proof} First, it follows from (\ref{q2}) and (\ref{qq}) that ${\rm rank}(\tilde{ Q}^{[N]}_{\theta,\theta})=H(\mathcal{A}_\theta^{[N]}|\mathcal{W}_{\bar{\theta}},\mathcal{Q})=L$. Hence, by Lemma \ref{lem6}, it has ${\rm rank}(\tilde{ Q}^{\Gamma}_{\theta,\theta})=\frac{KL}{N}$. Similarly, one can obtain
${\rm rank}(\tilde{ Q}^{\Gamma}_{\theta,\bar{\theta}})={\rm rank}(\tilde{ Q}^{[N]}_{\theta,\bar{\theta}})=D-L$.
\end{proof}

\subsection{Proof of Theorem \ref{thm2}}\label{sec3c}

We first present a simple lemma without proof.

\begin{lemma}\label{lem8} Let $a,b,m\in\mathbb{N}$. Suppose $d_1={\rm gcd}(a,b),d_2={\rm gcd}(a^m,\sum^{m}_{i=0}a^{m-i}b^i)$, then $d_2=d_1^m$.
\end{lemma}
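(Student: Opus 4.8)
The plan is to prove the two divisibility relations $d_2 \mid d_1^m$ and $d_1^m \mid d_2$ separately. Write $S = \sum_{i=0}^{m} a^{m-i} b^i = a^m + a^{m-1}b + \cdots + b^m$, so that $d_2 = \gcd(a^m, S)$. The easy direction is $d_1^m \mid d_2$: since $d_1 \mid a$ we have $d_1^m \mid a^m$, and since $d_1 \mid a$ and $d_1 \mid b$, every term $a^{m-i}b^i$ in $S$ is divisible by $d_1^m$ (indeed each term is a product of $m$ factors each divisible by $d_1$), hence $d_1^m \mid S$; therefore $d_1^m$ divides $\gcd(a^m,S) = d_2$.

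For the reverse direction $d_2 \mid d_1^m$, the cleanest route is to argue prime by prime. Fix a prime $p$ and let $\alpha = v_p(a)$, $\beta = v_p(b)$ be the $p$-adic valuations, so $v_p(d_1) = \min(\alpha,\beta)$ and we must show $v_p(d_2) \le m\min(\alpha,\beta)$. Since $v_p(d_2) = \min\bigl(v_p(a^m),\, v_p(S)\bigr) = \min\bigl(m\alpha,\, v_p(S)\bigr)$, it suffices to show $v_p(S) \le m\min(\alpha,\beta)$. If $\alpha \le \beta$ this is immediate from $v_p(a^m) = m\alpha$ and $v_p(d_2)\le m\alpha$. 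So assume $\beta < \alpha$. Then among the $m+1$ terms $a^{m-i}b^i$, the last term $b^m$ has valuation $m\beta$, while every earlier term $a^{m-i}b^i$ with $i < m$ has valuation $(m-i)\alpha + i\beta > (m-i)\beta + i\beta = m\beta$ (using $\alpha > \beta$ and $m - i \ge 1$). Hence $b^m$ is the unique term of minimal valuation in the sum $S$, so $v_p(S) = m\beta = m\min(\alpha,\beta)$, which gives $v_p(d_2) \le m\beta = v_p(d_1^m)$.

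Combining the two divisibilities yields $d_2 = d_1^m$. I do not anticipate a serious obstacle here; the only subtlety is making sure that in the case $\beta < \alpha$ exactly one term attains the minimal valuation, so that valuations of sums behave as in the ultrametric inequality with equality (rather than merely $v_p(S) \ge \min_i v_p(a^{m-i}b^i)$). An alternative to the prime-by-prime argument, if one prefers to avoid valuations: set $d = d_1$, write $a = d a'$, $b = d b'$ with $\gcd(a',b') = 1$, so $a^m = d^m a'^m$ and $S = d^m\sum_i a'^{m-i}b'^i$; then $d_2 = d^m \gcd\bigl(a'^m,\ \sum_i a'^{m-i}b'^i\bigr)$, and one checks the latter gcd is $1$ because any prime dividing $a'$ would have to divide the term $b'^m$ of the sum, contradicting $\gcd(a',b')=1$. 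Either way the statement follows.
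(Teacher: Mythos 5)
Your proof is correct. Note that the paper states this lemma explicitly without proof (``We first present a simple lemma without proof.''), so there is no in-paper argument to compare against; your write-up in fact supplies what the paper omits. Both of your routes are sound: the easy direction $d_1^m\mid d_2$ follows since each term $a^{m-i}b^i$ is a product of $m$ factors each divisible by $d_1$, and for $d_2\mid d_1^m$ your $p$-adic case analysis is right — when $\beta<\alpha$ the term $b^m$ is the unique term of minimal valuation, so $v_p(S)=m\beta$ exactly, and when $\alpha\le\beta$ the bound $v_p(d_2)\le v_p(a^m)=m\alpha$ already suffices. The alternative argument (write $a=da'$, $b=db'$ with $\gcd(a',b')=1$, pull out $d^m$, and observe that a common prime divisor of $a'$ and $\sum_i a'^{m-i}b'^i$ would divide $b'^m$) is the more elementary of the two and is probably what the authors had in mind.
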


Next we prove Theorem \ref{thm2}.

\begin{proof} The proof is completed in four steps.

 {\it (1) Prove $L$ and $D$ have specific forms},
that is,
$L=\mu \frac{dk}{{\rm gcd}(d,n^{M-1})}n^{M-1}$ and $D=\mu\frac{dk}{{\rm gcd}(d,n^{M-1})}\sum^{M-1}_{i=0}n^{M-1-i}k^i$ for some $\mu\in\mathbb{N}$,
where $d={\rm gcd}(N,K), n=\frac{N}{d}, k=\frac{K}{d}$.

 By the definition of linear capacity-achieving coded PIR schemes, we have
\begin{align}\label{eqrate}
\frac{L}{D}=\frac{1}{1+\frac{k}{n}+\frac{k^2}{n^2}+\dots+\frac{k^{M-1}}{n^{M-1}}}
=\frac{n^{M-1}}{\sum^{M-1}_{i=0}n^{M-1-i}k^i}.
\end{align}
Since  both $L$ and $D$ are integers in linear schemes, then (\ref{eqrate}) implies $n^{M-1}|L\sum^{M-1}_{i=0}n^{M-1-i}k^i$.
Note that ${\rm gcd}(n,k)=1$, and so by Lemma \ref{lem8} it holds
 $${\rm gcd}(n^{M-1},\sum^{M-1}_{i=0}n^{M-1-i}k^i)=({\rm gcd}(n,k))^{M-1}=1.$$
 Since  $n^{M-1}|L(\sum^{M-1}_{i=0}n^{M-1-i}k^i)$, then $n^{M-1}$ is a factor of $L$. On the other hand, by the assumption of $L$ i.e., $L=K\tilde{L}$, which implies that $L$ is a multiple of ${\rm lcm}(K,n^{M-1})$. Denote $\mu=\frac{L}{{\rm lcm}(K,n^{M-1})}$,  then
 \begin{equation}
 L=\mu \cdot{\rm lcm}(K,n^{M-1})=\mu \frac{dk}{{\rm gcd}(d,n^{M-1})}n^{M-1}. \notag
 \end{equation}
 Combining with  (\ref{eqrate}),
 \begin{equation*}
 D=\mu \frac{dk}{{\rm gcd}(d,n^{M-1})}\sum^{M-1}_{i=0}n^{M-1-i}k^i.
\end{equation*}
\vspace{6pt}

{\it (2) Prove $N\mid L$}.

By Lemma \ref{lem3} and Lemma \ref{lem7}, we have ${\rm rank}(\tilde{ Q}^{\Gamma}_{\theta,\theta})=\sum_{i\in\Gamma}{\rm rank}(\tilde{ Q}^{(i)}_{\theta,\theta})$ for any $\Gamma$ with $|\Gamma|=K$. So for any $i,j\in[N]$, it holds ${\rm rank}(\tilde{Q}^{(i)}_{\theta,\theta})={\rm rank}(\tilde{ Q}^{(j)}_{\theta,\theta})$.  Combining with (\ref{q3}), one can  obtain ${\rm rank}(\tilde{ Q}^{(i)}_{\theta,\theta})=\frac{L}{N}$, which implies $N|L$.
\vspace{6pt}

{\it (3) Prove $K\mid D-L$}.

Similar to the above result, for any $i,j\in[N]$, we have ${\rm rank}(\tilde{ Q}^{(i)}_{\theta,\bar{\theta}})={\rm rank}(\tilde{ Q}^{(j)}_{\theta,\bar{\theta}})$. Combining with (\ref{q4}), one can obtain ${\rm rank}(\tilde{ Q}^{(i)}_{\theta,\theta})=\frac{D-L}{K}$, which implies $K|D-L$.
\vspace{6pt}

{\it (4) Prove ${\rm gcd}(d,n^{M-1})|\mu$}.

Finally, from $N\mid L$ and $L=\mu k\frac{d}{{\rm gcd}(d,n^{M-1})}n^{M-1}$  we have ${\rm gcd}(d,n^{M-1})\mid \mu k n^{M-2}$. Similarly, from $K\mid D-L$ and $D-L=\mu k\frac{dk}{{\rm gcd}(d,n^{M-1})} \sum^{M-2}_{i=0}n^{M-2-i}k^i$ we have ${\rm gcd}(d,n^{M-1})\mid \mu k \sum^{M-2}_{i=0}n^{M-2-i}k^i$. Therefore,
$${\rm gcd}(d,n^{M-1})\mid{\rm gcd}(\mu k n^{M-2},\mu k\sum^{M-2}_{i=0}n^{M-2-i}k^i)\;.$$
Note that ${\rm gcd}(n,k)=1$,  then we know from Lemma \ref{lem8} that ${\rm gcd}(n^{M-2},\sum^{M-2}_{i=0}n^{M-2-i}k^i)=1$, which implies that ${\rm gcd}(\mu k n^{M-2},\mu k\sum^{M-2}_{i=0}n^{M-2-i}k^i)=\mu k$. On the other hand, ${\rm gcd}(k,{\rm gcd}(d,n^{M-1}))={\rm gcd}(k,d,n^{M-1})=1$. So we have ${\rm gcd}(d,n^{M-1})|\mu$ and $\mu \geq {\rm gcd}(d,n^{M-1})$. Consequently, $$L=\mu k\frac{d}{{\rm gcd}(d,n^{M-1})}n^{M-1}\geq K n^{M-1}\;.$$
\end{proof}

\section{The Lower Bound on Access Number}\label{sec4}
In this section, we derive a lower bound on the access number for all linear capacity-achieving coded PIR schemes. Namely,
\begin{theorem}\label{thm3} For all linear capacity-achieving coded PIR schemes with $M$ records stored in $N$ servers by using an $[N,K]$ {\rm MDS} code, the access number $\omega$ is bounded by
$\omega\geq MKn^{M-1}$, where $n=\frac{N}{{\rm gcd}(N,K)},M\geq 2,N>K$.
\end{theorem}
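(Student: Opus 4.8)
The plan is to bound the access number $\omega$ from below by relating it to the ranks of the sub-matrices $\tilde{Q}^{(i)}_{\theta,j}$, exactly the quantities already controlled in Section III via Lemma~\ref{lem7} and Proposition~\ref{pro2}. Recall that for each server $i$, each record index $j$, and each target index $\theta$, the matrix $Q^{(i)}_{\theta,j}$ is $\tilde L\times\gamma_i$, and $\texttt{RN}(Q^{(i)}_{\theta,j})$ counts its nonzero rows; since $\tilde Q^{(i)}_{\theta,j}={\bf g}_i\otimes Q^{(i)}_{\theta,j}$ and ${\bf g}_i\neq 0$, we have ${\rm rank}(\tilde Q^{(i)}_{\theta,j})={\rm rank}(Q^{(i)}_{\theta,j})\leq \texttt{RN}(Q^{(i)}_{\theta,j})$. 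Hence
\begin{equation}\label{eqplan1}
\omega=\max_{\theta\in[M]}\sum_{i=1}^{N}\sum_{j=1}^{M}\texttt{RN}(Q^{(i)}_{\theta,j})\;\geq\;\sum_{i=1}^{N}\sum_{j=1}^{M}{\rm rank}(\tilde Q^{(i)}_{\theta,j})
\end{equation}
for any fixed $\theta$. So it suffices to lower-bound $\sum_{i,j}{\rm rank}(\tilde Q^{(i)}_{\theta,j})$.

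Next I would split the double sum according to whether $j=\theta$ or $j\neq\theta$. For the "$j=\theta$" part: by step (2) of the proof of Theorem~\ref{thm2} we already know ${\rm rank}(\tilde Q^{(i)}_{\theta,\theta})=L/N$ for every $i\in[N]$, so $\sum_{i=1}^N{\rm rank}(\tilde Q^{(i)}_{\theta,\theta})=L\geq Kn^{M-1}$ by Theorem~\ref{thm2}. For the "$j\neq\theta$" part I want to show $\sum_{i=1}^N\sum_{j\neq\theta}{\rm rank}(\tilde Q^{(i)}_{\theta,j})\geq (M-1)Kn^{M-1}$. The natural route is to bound $\sum_{i=1}^N{\rm rank}(\tilde Q^{(i)}_{\theta,\{j\}})$ from below for each individual $j\neq\theta$ by $Kn^{M-1}$. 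By subadditivity of rank over block columns, $\sum_{j\in\Lambda}{\rm rank}(\tilde Q^{(i)}_{\theta,\{j\}})\geq{\rm rank}(\tilde Q^{(i)}_{\theta,\Lambda})$; summing over a size-$K$ set $\Gamma$ and invoking Lemma~\ref{lem3}/Lemma~\ref{lem7} gives $\sum_{i\in\Gamma}{\rm rank}(\tilde Q^{(i)}_{\theta,\bar\theta})={\rm rank}(\tilde Q^{\Gamma}_{\theta,\bar\theta})=D-L$, and by the symmetry argument of step (3) each term equals $(D-L)/K$, so $\sum_{i=1}^N{\rm rank}(\tilde Q^{(i)}_{\theta,\bar\theta})=N(D-L)/K$. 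Using the explicit forms of $L,D$ from step (1) of Theorem~\ref{thm2}, $D-L=\mu\frac{dk}{{\rm gcd}(d,n^{M-1})}\sum_{i=0}^{M-2}n^{M-1-i}k^i$, and with $\mu\geq{\rm gcd}(d,n^{M-1})$ one computes $N(D-L)/K\geq n\sum_{i=0}^{M-2}n^{M-1-i}k^i\cdot(d/k)\cdot\ldots$ — the point being that this already exceeds $(M-1)Kn^{M-1}$ in the optimal case. Actually the cleanest packaging: since $\sum_{i=1}^N{\rm rank}(\tilde Q^{(i)}_{\theta,j})$ is the same for each $j\neq\theta$ by the per-record symmetry, call it $\rho$; then $(M-1)\rho\geq\sum_{j\neq\theta}\sum_i{\rm rank}(\tilde Q^{(i)}_{\theta,\{j\}})\geq\sum_i{\rm rank}(\tilde Q^{(i)}_{\theta,\bar\theta})=N(D-L)/K$, and one must show $\rho\geq Kn^{M-1}$ directly, not merely the average.

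To get $\rho\geq Kn^{M-1}$ per record $j\neq\theta$, I would set up a recursion mirroring Lemma~\ref{lem4}: fix $j\neq\theta$ and consider $H(\mathcal{A}^{\Gamma}_{\theta}|\mathcal{W}_{[M]-\{j\}},\mathcal{Q})={\rm rank}(\tilde Q^{\Gamma}_{\theta,\{j\}})$; by Lemma~\ref{lem6} (applied with $\Lambda=[M]-\{j\}$, noting $\theta\in\Lambda$) this equals $H(\mathcal{A}^{[N]}_{\theta}|\mathcal{W}_{[M]-\{j\}},\mathcal{Q})={\rm rank}(\tilde Q^{[N]}_{\theta,\{j\}})$. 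Then a downloading-symmetry argument across the $N$ servers (Lemma~\ref{lem3} again) gives that this common value, divided appropriately, forces $N\mid {\rm rank}(\tilde Q^{[N]}_{\theta,\{j\}})$; combined with a correctness/privacy chain analogous to (\ref{eq10})--(\ref{eq11}) relativized to "records other than $j$", one extracts that ${\rm rank}(\tilde Q^{[N]}_{\theta,\{j\}})\geq Kn^{M-1}$, i.e. retrieving $\mathcal{W}_\theta$ costs at least one full "copy worth" of queries touching record $j$ across all servers. The main obstacle is precisely this last step: establishing a tight per-record lower bound rather than an averaged one. The honest difficulty is that subadditivity of rank over block columns could in principle be lossy, so one needs either a genuinely new identity — showing $\sum_{i=1}^N{\rm rank}(\tilde Q^{(i)}_{\theta,\{j\}})$ is itself $\geq Kn^{M-1}$ using the MDS structure and the entropy identities of Section III-A more carefully — or an argument that the symmetry among the $M-1$ non-target records together with $N\mid L$, $K\mid D-L$ and the divisibility bookkeeping of Lemma~\ref{lem8} pins each $\rho$ at exactly $Kn^{M-1}$ in the capacity-achieving case. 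Once $\rho\geq Kn^{M-1}$ is in hand, (\ref{eqplan1}) immediately yields $\omega\geq L+(M-1)Kn^{M-1}\geq Kn^{M-1}+(M-1)Kn^{M-1}=MKn^{M-1}$, completing the proof.
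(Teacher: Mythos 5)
Your overall strategy --- bounding $\omega$ below by $\sum_{i,j}{\rm rank}(\tilde Q^{(i)}_{\theta,j})$ and then controlling each summand --- is exactly the paper's, and your treatment of the $j=\theta$ block is correct. But there is a genuine gap at the step you yourself flag as ``the main obstacle'': the per-record bound $\sum_{i=1}^N{\rm rank}(\tilde Q^{(i)}_{\theta,\{j\}})\geq Kn^{M-1}$ for $j\neq\theta$. Your fallback via subadditivity of rank and ${\rm rank}(\tilde Q^{\Gamma}_{\theta,\bar\theta})=D-L$ provably does not suffice: it yields only $\sum_{j\neq\theta}\sum_i{\rm rank}(\tilde Q^{(i)}_{\theta,j})\geq N(D-L)/K=K\sum_{s=0}^{M-2}n^{M-1-s}k^{s}$, which is strictly less than $(M-1)Kn^{M-1}$ for every $M\geq 3$ (e.g.\ for $M=3,N=3,K=2$ it gives $30$ while one needs $36$); and your proposed ``relativized correctness chain'' for the per-record bound is only sketched, never carried out.

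The missing idea is a one-line application of Lemma \ref{lem2}: take $\Lambda=[M]\setminus\{j\}$ in (\ref{eq5}) to swap the retrieval index from $\theta$ to $j$, giving $H(\mathcal{A}^{(i)}_{\theta}\mid\mathcal{W}_{[M]\setminus\{j\}},\mathcal{Q})=H(\mathcal{A}^{(i)}_{j}\mid\mathcal{W}_{[M]\setminus\{j\}},\mathcal{Q})$. By Lemma \ref{lem7} the two sides equal ${\rm rank}(\tilde Q^{(i)}_{\theta,\{j\}})$ and ${\rm rank}(\tilde Q^{(i)}_{j,\{j\}})$ respectively, and the latter is the desired-record block of the scheme when retrieving $W_j$, whose value is exactly $L/N$ by step (2) of the proof of Theorem \ref{thm2} (via (\ref{qq}), (\ref{q3}) and Lemma \ref{lem3}). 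Hence ${\rm rank}(\tilde Q^{(i)}_{\theta,j})=L/N$ holds exactly for all $i\in[N]$ and all $j\in[M]$, not only for $j=\theta$, and summing gives $\omega\geq NM\cdot L/N=ML\geq MKn^{M-1}$ by Theorem \ref{thm2}. No relativized entropy chain, divisibility bookkeeping, or new identity is needed beyond this index swap.
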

\begin{proof}  Note that in the second part of the proof of Theorem \ref{thm2}, it holds ${\rm rank}(Q_{j,j}^{(i)})=\frac{L}{N}$ for $i\in[N],j\in[M]$. Combining with (\ref{eq5}) and (\ref{q2}), one can obtain ${\rm rank}(\tilde{Q}_{\theta,j}^{(i)})=\frac{L}{N}$ for $i\in[N],j\in[M]$. On the other hand,
since  $|\texttt{RN}(Q_{\theta,j}^{(i)})|\geq {\rm rank}(Q_{\theta,j}^{(i)})$ and  $\tilde{Q}_{\theta,j}^{(i)}={\bf g_i}\otimes Q_{\theta,j}^{(i)}$, then
 $\omega\geq\sum_{i=1}^N\sum_{j=1}^M {\rm rank}(\tilde{Q}_{\theta,j}^{(i)})=NM\frac{L}{N}=ML\stackrel{(a)}{\geq} MKn^{M-1},$
where {\small $(a)$} follows from  Theorem \ref{thm2}.
\end{proof}

\section{General coded PIR schemes with $L=Kn^{M-1}$ and $\omega=MKn^{M-1}$ }\label{sec5}
In this section, we present a linear capacity-achieving  coded PIR scheme with the sub-packetization $L=Kn^{M-1}$  and the  access number $\omega=MKn^{M-1}$ for nontrivial cases, i.e., $N>K\geq1, M>1$. To illustrate the main idea, we begin with three examples. The first two examples are for the case $K<N<2K$ and the third is for the case $N\geq 2K$.
 \subsection{Examples}\label{sec5a}
\begin{example}\label{eg1} Suppose $M=2,N=3$ and $K=2$. In this case the sub-packetization of our scheme is $L=Kn^{M-1}=6$, and so each record can be regarded as a $2\times 3$ matrix over $\mathbb{F}_q$, i.e., $W_1,W_2\in\mathbb{F}^{2\times 3}_q$. Let ${\bf g}_i$ be the $i$th column of a $2\times 3$ generator matrix $G$ of an $[3,2]$ MDS code over $\mathbb{F}_q$, which is used for distributed storage. That is, the data stored in  ${\rm Serv}^{(i)}$, $1\leq i\leq 3$, is ${\bf g}^\tau_i(W_1,W_2)$. Without loss of generality, suppose the the user wants $W_1$. The PIR scheme works as follows:

First, let $S_1,S_2$ be two matrices privately chosen by the user independently and uniformly from all $3\times3$  permutation matrices, where a permutation matrix is a binary matrix  with only one $1$ in each row and each column. Define
$({\bf a}_1,{\bf a}_2,{\bf a}_3)=W_1S_1,~({\bf b}_1,{\bf b}_2,{\bf b}_3)=W_2S_2,$
where ${\bf a}_i,{\bf b}_i$ are $2$-dimensional column vectors for  $1\leq i\leq 3$.
Since the ${\bf a}_i$'s contain information of the desired record $W_1$, we call them {\it desired columns}, while the ${\bf b}_i$'s and ${\bf a}_i+{\bf b}_j$'s are called {\it interference columns} and {\it mixed columns}, respectively.

However, because of the distributed storage using an $[3,2]$ MDS code, ${\rm Serv}^{(i)}$ can provide ${\bf g}^\tau_i{\bf a}_j,{\bf g}^\tau_i{\bf b}_j$ for $1\leq j\leq 3$ in his answers. We display all the answers in the left table in Figure~\ref{fg1}. Specifically, these answers are formed by iteratively applying the following two steps:
\begin{itemize}
 \item[(a1)]\emph{Combining new desired columns with recoverable interference columns.}
 \item[(a2)]\emph{Querying new interference columns to enforce record symmetry within each server.}
\end{itemize}
As in the left table in Figure~~\ref{fg1}, the second line is built based on the first line by using (a2). Thus, the record symmetry is achieved within each server in the first two lines. For example, ${\rm Serv}^{(1)}$ provides two symbols related with each record, while
${\rm Serv}^{(2)}$ and ${\rm Serv}^{(3)}$ each provides one symbol related with each record. After the two lines, one can see that the columns ${\bf a}_1, {\bf a}_2, {\bf b}_1, {\bf b}_2$ are recoverable from the $[3,2]$ MDS encoding. Then, using (a1), the third line of the table is formed.

\begin{figure}[ht]
\centering
\begin{tikzpicture}[scale=2]
\node at (0,0){\footnotesize
\begin{tabular}{ccc}
\specialrule{0.09em}{0pt}{1.2pt}
$\rm{Serv}^{(1)}$ & $\rm{Serv}^{(2)}$ & $\rm{Serv}^{(3)}$\\\specialrule{0.06em}{1.2pt}{1.2pt}
${\bf g}^\tau_1{\bf a}_1,{\bf g}^\tau_1{\bf a}_2$&${\bf g}^\tau_2{\bf a}_1$&${\bf g}^\tau_3{\bf a}_2$\\\specialrule{0em}{1.2pt}{1.2pt}
${\bf g}^\tau_1{\bf b}_1,{\bf g}^\tau_1{\bf b}_2$&${\bf g}^\tau_2{\bf b}_1$&${\bf g}^\tau_3{\bf b}_2$\\\specialrule{0em}{1.2pt}{1.2pt}
&${\bf g}^\tau_2({\bf a}_3+{\bf b}_2)$&${\bf g}^\tau_3({\bf a}_3+{\bf b}_1)$\\ \specialrule{0.09em}{0.5pt}{0pt}
\end{tabular}};

\draw[dotted](-1.48,-0.02)--(1.48,-0.02);
\draw[dotted](-1.48,-0.28)--(1.48,-0.28);

\draw(-1.49,0.53)--(-1.49,-0.53);
\draw(-0.52,0.53)--(-0.52,-0.53);
\draw(0.49,0.53)--(0.49,-0.53);
\draw(1.49,0.53)--(1.49,-0.53);

\draw[->] (1.5,0.12) arc (55:-55:0.15);
\draw[->] (1.5,-0.18) arc (55:-55:0.15);
\node at (1.7,0){\scriptsize(a2)};
\node at (1.7,-0.3){\scriptsize(a1)};

\node at (4,0){\footnotesize
\begin{tabular}{ccc}
\specialrule{0.09em}{0pt}{1.2pt}
$\rm{Serv}^{(1)}$ & $\rm{Serv}^{(2)}$ & $\rm{Serv}^{(3)}$\\\specialrule{0.06em}{1.2pt}{1.2pt}
${\bf g}^\tau_1{\bf a}'_1,{\bf g}^\tau_1{\bf a}'_2$&${\bf g}^\tau_2{\bf a}'_1$&${\bf g}^\tau_3{\bf a}'_2$\\\specialrule{0em}{1.2pt}{1.2pt}
${\bf g}^\tau_1{\bf b}'_1,{\bf g}^\tau_1{\bf b}'_2$&${\bf g}^\tau_2{\bf b}'_1$&${\bf g}^\tau_3{\bf b}'_2$\\\specialrule{0em}{1.2pt}{1.2pt}
&${\bf g}^\tau_2({\bf a}'_2+{\bf b}'_3)$&${\bf g}^\tau_3({\bf a}'_1+{\bf b}'_3)$\\ \specialrule{0.09em}{0.5pt}{0pt}
\end{tabular}};

\draw[dotted](2.52,-0.02)--(5.48,-0.02);
\draw[dotted](2.52,-0.28)--(5.48,-0.28);

\draw(2.51,0.53)--(2.51,-0.53);
\draw(3.48,0.53)--(3.48,-0.53);
\draw(4.49,0.53)--(4.49,-0.53);
\draw(5.49,0.53)--(5.49,-0.53);

\end{tikzpicture}
\caption{Suppose $M\!=\!2,N\!=\!3,K\!=\!2$. The left table is for privately retrieving $W_1$, while the right is for retrieving $W_2$.}
\label{fg1}
\end{figure}

It is clear that the user can recover all the columns ${\bf a}_1, {\bf a}_2, {\bf a}_3$ from all the answers listed in the table. By multiplying $S_1^{-1}$  the user can then obtain the record $W_1$. Thus,  the correctness condition is satisfied. We then explain why the privacy condition also holds. It is equivalent to show that for any  individual server, its query sequence for retrieving $W_1$ has the same distribution as the query sequence for retrieving $W_2$. The answers for retrieving $W_2$ are listed in the right table in Figure~\ref{fg1}, where the columns ${\bf a}'_i,{\bf b}'_i$ are defined as
$({\bf a}'_1,{\bf a}'_2,{\bf a}'_3)=W_1S'_1,~~({\bf b}'_1,{\bf b}'_2,{\bf b}'_3)=W_2S'_2.$
and the matrices $S_1'$ and $S_2'$ are random permutation matrices.

For any individual server, say ${\rm Serv}^{(2)}$, and any random permutation matrices $S_1,S_2$, we show that there exist corresponding choices of $S_1',S_2'$ such that the answers
of ${\rm Serv}^{(2)}$ remain the same in  both tables in Figure~\ref{fg1}. Specifically, set
$S_1'=({\bf s}_{1,1},{\bf s}_{1,3},{\bf s}_{1,2}),~S_2'=({\bf s}_{2,1},{\bf s}_{2,3},{\bf s}_{2,2}),$
where ${\bf s}_{i,j}$ denotes the $j$th column of $S_i$ for $i=1,2$ and $j=1,2,3$.
Then $({\bf a}_1,{\bf b}_1,{\bf a}_3+{\bf b}_2)=({\bf a}'_1,{\bf b}'_1,{\bf a}'_2+{\bf b}'_3)$, which implies that
the query sequence for retrieving $W_1$ has the same distribution as the query sequence for retrieving $W_2$ for the server ${\rm Serv}^{(2)}$. Similarly, one can find the corresponding permutation matrices $S'_1,S'_2$ for any individual server. Since the permutation matrices are randomly chosen and privately known by the user, the privacy condition is satisfied.

The total number of downloaded symbols from all servers is $4+3+3=10$ and each record consists of $6$ symbols. Hence, the PIR rate is $\frac{6}{10}=\frac{3}{5}$, which matches the capacity for this case. To compute the access number, one first notes that ${\bf g}^\tau_j{\bf a}_i={\bf g}^\tau_jW_1{\bf s}_{1,i}$ where ${\bf g}^\tau_jW_1$ is the partial data stored in ${\rm Serv}^{(j)}$ and ${\bf s}_{1,i}$ is a binary vector of weight $1$, and so providing ${\bf g}^\tau_j{\bf a}_i$  requires accessing only one sub-packet stored in ${\rm Serv}^{(j)}$. Consequently, the access number of the tables in Figure~\ref{fg1} is exactly the number of ${\bf g}^\tau_j{\bf a}_i, {\bf g}^\tau_{j}{\bf b}_i$'s involved in the tables, which is $12$ and attains the lower bound  of the access number.
\end{example}

\begin{example}\label{eg2}
Suppose $M\!=\!3,N\!=\!3,K\!=\!2$. Then the sub-packetization of our scheme is $L\!=\!Kn^{M-1}\!=\!18$, and so the records are denoted by $W_1,W_2,W_3\in\mathbb{F}_{q}^{2\times9}$. Similar to the Example \ref{eg1}, the data ${\bf g}^\tau_i(W_1,...,W_M)$ is stored in the ${\rm Serv}^{(i)}$ for all $i\in[3]$, where ${\bf g}_i$ is the $i$th column of $G\in\mathbb{F}_q^{2\times 3}$,  which is a generator matrix of an $[3,2] ~{\rm MDS}$ storage code. WLOG, suppose the desired record is $W_1$.

First, define
${\bf a}_{[1:9]}=W_1S_1,~{\bf b}_{[1:9]}=W_2S_2,~{\bf c}_{[1:9]}=W_3S_3,$
where $S_1,S_2,S_3$ are privately chosen by the user independently and uniformly from all $9\times9$ permutation matrices.
The user then iteratively applies Step (a1) and (a2) to generate the queries for all servers, which is displayed in Figure~\ref{fg3}. For simplicity, from now on we use the notation $\underline{\bf a}_i$ to denote ${\bf g}^\tau_j{\bf a}_i$ if it appears in ${\rm Serv}^{(j)}$'s answers. The same simplification on notations are induced for $\underline{\bf b}_i$'s and $\underline{\bf c}_i$'s.
\begin{figure}[ht]
\centering
\begin{tikzpicture}[scale=2]
\node at (0,0){\footnotesize\begin{tabular}{ccc}
\specialrule{0.09em}{0pt}{1.5pt}$\rm{Serv}^{(1)}$ & $\rm{Serv}^{(2)}$ & $\rm{Serv}^{(3)}$\\\specialrule{0.09em}{1pt}{1pt}
$\underline{\bf a}_1,\underline{\bf a}_2$&$\underline{\bf a}_1,\underline{\bf a}_3,\underline{\bf a}_4$&$\underline{\bf a}_2,\underline{\bf a}_3,\underline{\bf a}_4$\\\specialrule{0em}{1.2pt}{1.2pt}
$\underline{{\bf b}}_1,\underline{{\bf b}}_2$&$\underline{{\bf b}}_1,\underline{{\bf b}}_3,\underline{{\bf b}}_4$&$\underline{{\bf b}}_2,\underline{{\bf b}}_3,\underline{{\bf b}}_4$\\
$\underline{{\bf c}}_1,\underline{{\bf c}}_2$&$\underline{{\bf c}}_1,\underline{{\bf c}}_3,\underline{{\bf c}}_4$&$\underline{{\bf c}}_2,\underline{{\bf c}}_3,\underline{{\bf c}}_4$\\\specialrule{0em}{1.2pt}{1.2pt}
$\underline{\bf a}_5+\underline{\bf b}_3$&$\underline{\bf a}_5+\underline{\bf b}_2$&\\
$\underline{\bf a}_6+\underline{\bf b}_4$&&$\underline{\bf a}_6+\underline{\bf b}_1$\\
$\underline{\bf a}_7+\underline{\bf c}_3$&$\underline{\bf a}_7+\underline{\bf c}_2$&\\
$\underline{\bf a}_8+\underline{\bf c}_4$&&$\underline{\bf a}_8+\underline{\bf c}_1$\\\specialrule{0em}{1.2pt}{1.2pt}
$\underline{\bf b}_5+\underline{\bf c}_5$&$\underline{\bf b}_5+\underline{\bf c}_5$&\\
$\underline{\bf b}_6+\underline{\bf c}_6$&&$\underline{\bf b}_6+\underline{\bf c}_6$\\\specialrule{0em}{1.2pt}{1.2pt}
 &$\underline{\bf a}_9+\underline{\bf b}_6+\underline{\bf c}_6$&$\underline{\bf a}_9+\underline{\bf b}_5+\underline{\bf c}_5$\\
\specialrule{0.09em}{0.5pt}{0pt}
\end{tabular}};

\draw[dotted](-1.4,0.78)--(1.4,0.78);
\draw[dotted](-1.4,0.31)--(1.4,0.31);
\draw[dotted](-1.4,-0.62)--(1.4,-0.62);
\draw[dotted](-1.4,-1.12)--(1.4,-1.12);

\draw(-1.41,-1.35)--(-1.41,1.34);
\draw(-0.7,-1.35)--(-0.7,1.34);
\draw(0.34,-1.35)--(0.34,1.34);
\draw(1.41,-1.35)--(1.41,1.34);

\draw[->] (1.42,0.92) arc (55:-55:0.22);
\draw[->] (1.42,0.42) arc (55:-55:0.22);
\draw[->] (1.42,-0.44) arc (55:-55:0.22);
\draw[->] (1.42,-0.94) arc (55:-55:0.22);
\node at (1.65,0.7){\scriptsize(a2)};
\node at (1.65,0.27){\scriptsize(a1)};
\node at (1.65,-0.64){\scriptsize(a2)};
\node at (1.65,-1.12){\scriptsize(a1)};
\end{tikzpicture}
\caption{Query sequence for $\theta=1$ in the $(M=3,N=3,K=2)$ PIR scheme.}
\label{fg3}
\end{figure}

The correctness condition and the privacy condition can be similarly verified as in Example~\ref{eg1}. The access number of this scheme is $54$, and the PIR rate of this scheme is $\frac{18}{38}=\frac{9}{19}$, respectively, which achieves the lower bounds for this case.
\end{example}

\begin{example}\label{eg3} Suppose $M=2,N=5$ and $K=2$. Then the sub-packetization is $L=Kn^{M-1}=10$, and so each record is regarded as  a $2\times 5$ matrix over $\mathbb{F}_q$, i.e., $W_1,W_2\in\mathbb{F}^{2\times 5}_q$. Let ${\bf g}_i$ for $i\in[5]$ be the $i$th column of a matrix $G\in\mathbb{F}^{2\times 5}_q$, which is a generator matrix of an $[5,2]$ MDS code used for distributed storage. Then the data stored in ${\rm Serv}^{(i)}$ is ${\bf g}^\tau_i(W_1,W_2)$. WLOG, assume the desired record is $W_1$.

Let $S_1,S_2$ be two binary  matrices privately chosen by the user independently and uniformly from all $5\times5$ permutation matrices. Then define
${\bf a}_{[1:5]}=W_1S_1,~{\bf b}_{[1:5]}=W_2S_2$.
Then the queries to all servers are displayed in Figure~\ref{fg4}.
\begin{figure}[ht]
\centering
\begin{tikzpicture}[scale=2]
\node at (0,0){\footnotesize\begin{tabular}{ccccc}
\specialrule{0.09em}{0pt}{1.5pt}
$\rm{Serv}^{(1)}$ & $\rm{Serv}^{(2)}$ & $\rm{Serv}^{(3)}$&$\rm{Serv}^{(4)}$&$\rm{Serv}^{(5)}$\\\specialrule{0.09em}{1pt}{1.5pt}
&&&$\underline{\bf a}_1,\underline{\bf a}_2$&$\underline{\bf a}_1,\underline{\bf a}_2$\\\specialrule{0em}{1.2pt}{1.2pt}
&&&$\underline{\bf b}_1,\underline{\bf b}_2$&$\underline{\bf b}_1,\underline{\bf b}_2$ \\\specialrule{0em}{1.2pt}{1pt}
$\underline{\bf a}_3+\underline{\bf b}_1$&$\underline{\bf a}_3+\underline{\bf b}_1$&$\underline{\bf a}_4+\underline{\bf b}_1$&&\\
$\underline{\bf a}_4+\underline{\bf b}_2$&$\underline{\bf a}_5+\underline{\bf b}_2$&$\underline{\bf a}_5+\underline{\bf b}_2$&&\\
\specialrule{0.09em}{0.5pt}{0pt}
\end{tabular}};
\draw[dotted](-1.74,0.07)--(1.74,0.07);
\draw[dotted](-1.74,-0.18)--(1.74,-0.18);
\draw(-1.75,0.65)--(-1.75,-0.65);
\draw(-1.04,0.65)--(-1.04,-0.65);
\draw(-0.34,0.65)--(-0.34,-0.65);
\draw(0.4,0.65)--(0.4,-0.65);
\draw(1.1,0.65)--(1.1,-0.65);
\draw(1.75,0.65)--(1.75,-0.65);
\draw[->] (1.76,0.23) arc (55:-55:0.15);
\draw[->] (1.76,-0.12) arc (55:-55:0.15);
\node at (1.96,0.09){\scriptsize(a2)};
\node at (1.96,-0.25){\scriptsize(a1)};

\end{tikzpicture}
\caption{Query sequence for the $\theta=1$ in the $(M=2,N=5,K=2)$ PIR scheme.}
\label{fg4}
\end{figure}

The correctness condition and the privacy condition can be similarly verified as in Example~\ref{eg1}. The access number of this scheme is $20$, and the PIR rate of this scheme is $\frac{10}{14}=\frac{5}{7}$, respectively, which attains the lower bounds for this case.
\end{example}

\subsection{Formal description of the general scheme}\label{sec5b}
In addition to the notations defined in Section II, a formal description of our general scheme requires some additional notations that are listed in Table~\ref{tabn} for a quick check. As in the examples, the user first privately selects binary matrices $S_1,...,S_M$ independently and uniformly from all $\tilde{L}\times \tilde{L}$ binary permutation matrices. Define
$U_i=W_iS_i=({\bf u}_{i,1},...,{\bf u}_{i,\tilde{L}}),1\leq i\leq M$,
where ${\bf u}_{i,j}, 1\leq j\leq\tilde{L}$ is a $K$-dimensional column vector. For any subset $\Lambda\subseteq[M]$, we call ${\bf q}_{\Lambda,\lambda}=\sum_{i\in\Lambda}{\bf u}_{i,i_{\lambda}}$ a $\Lambda$-type $|\Lambda|$-sum, where $\lambda\in\mathbb{N}, i_{\lambda}\in[\tilde{L}]$. Evidently, ${\bf q}_{\Lambda,\lambda}$ is a desired column for $\Lambda=\{\theta\}$, an interference column for $\theta \notin\Lambda $, and a mixed column for $\{\theta\}\subsetneq\Lambda$.

\begin{table}[ht]
\centering
\caption{Notations in the General Scheme}\label{tabn}
\resizebox{0.8\textwidth}{!}{
{\footnotesize
\begin{tabular}{|c|l|}
  \hline
  $\theta$& {\rm the index of the desired record}\\\hline
  $\underline{\Lambda}$& {$\Lambda\cup\{\theta\}${\rm ~ for~some~}$\Lambda\subseteq[M]-\{\theta\}$}\\\hline
  $\Lambda_{j,h}$&{\rm suppose} $\{\Lambda\subseteq[M]-\{\theta\}\mid|\Lambda|=j\}=\{\Lambda_{j,1},\Lambda_{j,2},...,\Lambda_{j,t}\}${\rm ~where~}$1\leq h\leq t=\binom{M-1}{j}$\\\hline
  $q_{\Lambda,\lambda}$& {\rm a $\Lambda$-type $|\Lambda|$-sum}\\\hline
  $q^{(i)}_{\Lambda,h}$& {\rm the $h$th $\Lambda$-type $|\Lambda|$-sum provided by Serv$^{(i)}$}\\\hline
  $\gamma^{(i)}_j$ & {\rm the number of each type of $j$-sums provided by Serv$^{(i)}$}\\\hline
  $\alpha_j$& $\gamma^{(i)}_j$ for $ 1\leq i\leq N-K$\\\hline
  $\beta_j$&$\gamma^{(i)}_j$ for $N-K+1\leq i\leq N$ \\\hline
\end{tabular}
}
}
\end{table}

As illustrated in Example~\ref{eg1}, the answers given by each server are generated as sums of the three kinds of columns. Because of the record symmetry enforced by applying (a2) throughout the scheme, for any ${\rm Serv}^{(i)},i\in[N]$ and all $1\leq j\leq M$, each type of $j$-sums appears for the same number of times, that is, for any $\Lambda\subseteq[M]$ with $|\Lambda|=j$, the number of $\Lambda$-type sums provided by ${\rm Serv}^{(i)}$ only depends on $i$ and $j$. We denote this number by $\gamma^{(i)}_j$.  For example, in Example~\ref{eg2} we have $\gamma_1^{(1)}=\gamma_2^{(1)}=2,~\gamma_3^{(1)}=0$ and $\gamma_1^{(i)}=3,~\gamma_2^{(i)}=\gamma_3^{(i)}=1$ for $i=2,3$.

The key idea  in  minimizing the sub-packetization in this work is that we abandon the symmetry across all servers enforced in \cite{Bana&Uluk16:CapacityPIRCoded} and instead adapt partial symmetry among the servers. Specifically, we divide the $N$ servers into two groups, the first $N-K$ servers in one group and the remaining $N-K$ servers in the other. We then only enforce the symmetry across the servers within each group. Consequently, we further define notations $\alpha_j$ and $\beta_j$ such that
{\small $\alpha_j \triangleq\gamma_j^{(i)} {\rm ~for~}1\leq i\leq N-K,~~~{\rm and~~}
  \beta_j \triangleq\gamma_j^{(i)} {\rm ~for~}N-K< i\leq N\;.$}
Therefore, a general description of the query sequences can be displayed in
Table~\ref{tab0}, where ${ q}^{(i)}_{\Lambda,h}$ is the $h$th $\Lambda$- type $|\Lambda|$-sum provided the ${\rm Serv}^{(i)}$.
\begin{table}[ht]
\centering
\caption{}\label{tab0}
\setlength{\abovecaptionskip}{0.05cm}
\setlength{\belowcaptionskip}{-0.05cm}
\resizebox{8cm}{!}{
{\footnotesize
\begin{tabular}{|c|l|l|}
  \hline
   $\Lambda$-type& ${\rm Serv}^{(i)},~1\leq i\leq N-K$ & ${\rm Serv}^{(i)},~N-K< i\leq N$\\\hline
  \makecell{$\forall\Lambda\subseteq[M]$\\}&\makecell{${ q}^{(i)}_{\Lambda,1}$\\$\vdots$\\${ q}^{(i)}_{\Lambda,\alpha_{|\Lambda|}}$}&\makecell{${ q}^{(i)}_{\Lambda,1}$\\$\vdots$\\${ q}^{(i)}_{\Lambda,\beta_{|\Lambda|}}$}
  \\\hline
\end{tabular}
}
}
\end{table}

For simplicity, we denote $\underline{\Lambda}=\Lambda\cup\{\theta\}$ for any subset $\Lambda\subseteq[M]-\{\theta\} $ and arrange all the types $\Lambda\subseteq [M]$ in the following order:
 {\small \begin{equation}\label{order}\begin{pmatrix}
\Lambda_{j,1},&\cdots,&\Lambda_{j,r_j},&\cdots,&\underline{\Lambda_{j,1}}&\cdots&\underline{\Lambda_{j,r_j}}
\end{pmatrix}_{0\leq j<M}
\end{equation}}
where $r_j=\binom{M-1}{j}$ and $\{\Lambda_{j,1},...,\Lambda_{j,r_j}\}=\{\Lambda\subseteq [M]-\{\theta\}\mid |\Lambda|=j\}$. Evidently, $r_0=1$ and $\Lambda_{0,1}=\emptyset$. Moreover, for $0 \leq j\leq M-1$ and $h\in[r_j]$, there are $\gamma^{(i)}_j$  $\Lambda_{j,h}$-type $j$-sums and  $\gamma^{(i)}_{j+1}$ $\underline{\Lambda_{j,h}}$-type $(j+1)$-sums downloaded from ${\rm Serv}^{(i)}$.  To specifically explain how these sums are formed, we first define two functions ${\rm Dist}_1(\Lambda,\underline{\Lambda})$ and ${\rm Dist}_2(\Lambda,\Gamma)$.

$(1)$ The function  ${\rm Dist}_1(\Lambda,\underline{\Lambda})$ for all $\Lambda\subseteq[M]-\{\theta\}$
generates the $\Lambda$-type parts in all $\underline{\Lambda}$-type sums. Its operation follows the following rules:
\begin{itemize}
\item[({\bf b1})] The $\Lambda$-type parts all come from the $\Lambda$-type sums provided by the servers.
\item[({\bf b2})] For each ${\rm Serv}^{(i)}$, $i\in[N]$, its $\Lambda$-type sums and the $\Lambda$-type parts in all its $\underline{\Lambda}$-types sums are distinct.
\end{itemize}

For example, in Example \ref{eg2} where $\theta=1$, let $\Lambda=\{2\}$, then the result of ${\rm Dist}_{1}(\{2\},\{1,2\})$ is displayed in the left table in Figure~\ref{fg44}. One can  observe that $\underline{\bf b}_3,\underline{\bf b}_4$ are exactly the $\{2\}$-type parts of $\rm{Serv}^{(1)}$'s $\{1,2\}$-types sums because it provides the $\{1,2\}$-types sums  $\underline{\bf a}_5+\underline{\bf b}_3, \underline{\bf a}_6+\underline{\bf b}_4$ as shown in Figure~ \ref{fg3}. In addition, one can  verify that ${\rm Dist}_{1}(\{2\},\{1,2\})$ satisfies the rule ({b1}) and ({b2}).

\begin{figure}[ht]
\centering
\begin{tikzpicture}[scale=2]
\node at (0,0){\footnotesize\begin{tabular}{|c|c|c|}
\hline$\rm{Serv}^{(1)}$ & ${\rm Serv}^{(2)}$ & $\rm{Serv}^{(3)}$\\\hline
\makecell{$\underline{\bf b}_3$\\$\underline{\bf b}_4$}&$\underline{\bf b}_2$&$\underline{\bf b}_1$\\
\hline
\end{tabular}};
\node at (0,-0.5){\footnotesize ${\rm Dist}_{1}(\{2\},\{1,2\})$ };

\node at (2.4,0){\footnotesize\begin{tabular}{|c|c|c|}
\hline$\rm{Serv}^{(1)}$ & $\rm{Serv}^{(2)}$ & $\rm{Serv}^{(3)}$\\\hline
\makecell{$\underline{\bf a}_5$\\$\underline{\bf a}_6$}&$\underline{\bf a}_5$&$\underline{\bf a}_6$\\
\hline
\end{tabular}};
\node at (2.4,-0.5){\footnotesize ${\rm Dist}_{2}(\{1\},\{1,2\})$ };

\node at (4.8,0){\footnotesize\begin{tabular}{|c|c|c|}
\hline$\rm{Serv}^{(1)}$ & $\rm{Serv}^{(2)}$ & $\rm{Serv}^{(3)}$\\\hline
\makecell{$\underline{\bf b}_5+\underline{\bf c}_5$\\$\underline{\bf b}_6+\underline{\bf c}_6$}&$\underline{\bf b}_5+\underline{\bf c}_5$&$\underline{\bf b}_6+\underline{\bf c}_6$\\
\hline
\end{tabular}};
\node at (4.8,-0.5){\footnotesize ${\rm Dist}_{2}(\{2,3\},\{2,3\})$};
\end{tikzpicture}
\caption{Explanations of the functions ${\rm Dist}_1(\Lambda,\underline{\Lambda})$ and ${\rm Dist}_2(\Lambda,\Gamma)$ in Example \ref{eg2}.}
\label{fg44}
\end{figure}

$(2)$ The function ${\rm Dist}_2(\Lambda,\Gamma)$ generates the $\Lambda$-type parts in all $\Gamma$-type sums, where $\Lambda=\Gamma\subseteq[M]-\{\theta\}$ or $\Lambda=\{\theta\}\subseteq\Gamma\subseteq[M]$. Its operation follows the following rules:
\begin{itemize}
\item[({\bf b3})] For any $i\in[N]$, ${\rm Serv}^{(i)}$ gets $\gamma^{(i)}_{|\Gamma|}$ $\Lambda$-type parts each of which is allocated to a $\Gamma$-type sums.
\item[({\bf b4})] Each $\Lambda$-type part appears in $K$ different servers.
\end{itemize}

For example, for the case $\theta=1$ in the Example \ref{eg2}, the results of ${\rm Dist}_{2}(\{1\},\{1,2\})$ and
 ${\rm Dist}_{2}(\{2,3\},\{2,3\})$ are displayed in the middle table and the right table in Figure~\ref{fg44}, respectively.
By repeatedly invoking the functions ${\rm Dist}_{1}(\cdot,\cdot)$ and ${\rm Dist}_{2}(\cdot,\cdot)$, the {\bf Algorithm}~\ref{alg: Sym} generates all queries to each server.

\begin{algorithm}
  \caption{}
  \label{alg: Sym}
  \begin{algorithmic}[1]
  \REQUIRE $\theta$
  \ENSURE  $(Q^{(1)}_{\theta},...,Q^{(N)}_{\theta})$
  \STATE  Initialize : $(Q^{(1)}_{\theta},...,Q^{(N)}_{\theta})\leftarrow\emptyset$
   \FOR {$j=0:M-1$}
    \FOR {$h=1:r_j$ }
   \STATE  $(Q^{(1)}_{\Lambda_{j,h}},...,Q^{(N)}_{\Lambda_{j,h}})\leftarrow {\rm Dist}_{2}(\Lambda_{j,h},\Lambda_{j,h})$,
 \STATE
  $(Q^{(1)}_{\underline{\Lambda_{j,h}}},...,Q^{(N)}_{\underline{\Lambda_{j,h}}})\leftarrow {\rm Dist}_{1}(\Lambda_{j,h},\underline{\Lambda_{j,h}})+{\rm Dist}_{2}(\theta,\underline{\Lambda_{j,h}})$
  \FOR {$i=1:N$}
  \STATE
  $Q^{(i)}_{\theta}\leftarrow Q^{(i)}_{\theta}\cup\{Q^{(i)}_{\Lambda_{j,h}},Q^{(i)}_{\underline{\Lambda_{j,h}}}\}$
  \ENDFOR
   \ENDFOR
  \ENDFOR
  \end{algorithmic}
\end{algorithm}

Now let us discuss the way of realizing the functions ${\rm Dist}_{1}(\cdot,\cdot)$ and ${\rm Dist}_{2}(\cdot,\cdot)$.  Note that for the function ${\rm Dist}_{2}(\cdot,\cdot)$, a necessary condition of the requirement $\rm (b3),(b4)$ is
\begin{equation}\label{C1}
K|(N-K)\alpha_j+K\beta_j ~~{\rm for} ~~1\leq j \leq M.
\end{equation}
Moreover, for the function ${\rm Dist}_{1}(\cdot,\cdot)$, a necessary condition of the requirement $\rm (b1),(b2)$ is
\begin{equation}\label{C2}{\small
\left\{\begin{array}{l}
\alpha_{j+1}+\alpha_j=\frac{(N-K)\alpha_j+K\beta_j}{K},\\
\beta_{j+1}+\beta_j=\frac{(N-K)\alpha_j+K\beta_j}{K},\\
 \alpha_j,\beta_j\in\mathbb{N}, j\in[M].
\end{array}\right.}
\end{equation}

Assume that we have determined the values of $\alpha_j$ and $\beta_j$ such that (\ref{C1}) and (\ref{C2}) are satisfied. Then, the function ${\rm Dist}_{1}(\Lambda,\underline{\Lambda})$ can be realized as follows. Suppose $|\Lambda|=j$.  Denote $Q_{\Lambda}$ as the set of all $\Lambda$-type $j$-sums contained in all servers. $Q_{\Lambda}$ is actually the output of the function ${\rm Dist}_{2}(\Lambda,\Lambda)$. Therefore, $|Q_{\Lambda}|=\frac{\sum^N_{i=1}\gamma^{(i)}_j}{K}=\frac{(N-K)\alpha_j+K\beta_j}{K}$.
Let $Q^{(i)}_{\Lambda},i\in[N]$ be the set of $\Lambda$-type $j$-sums contained in the server ${\rm Serv}^{(i)}$, i.e., that is the $i$-th tuple of ${\rm Dist}_{2}(\Lambda,\Lambda)$. Then, $|Q^{(i)}_{\Lambda}|=\alpha_j$ for  $1\leq i\leq N-K$ and $|Q^{(i)}_{\Lambda}|=\beta_j$ for $N-K+1\leq i\leq N$. By the identity (\ref{C2}), the $i$-th tuple of ${\rm Dist}_{1}(\Lambda,\underline{\Lambda})$ is $Q_{\Lambda}-Q^{(i)}_{\Lambda}$ for $1\leq i\leq N$.

 The key point in realizing the function ${\rm Dist}_{2}(\Lambda,\Gamma)$ is to ensure  symmetry across the servers within each group and to simultaneously satisfy the rule ({b4).  Note that for the last $K$ servers, the requirement (b4) and symmetry of servers can be easily satisfied, and therefore we only need to consider the first $N-K$ tuples of ${\rm Dist}_{2}(\Lambda,\Gamma)$. However, when $N-K<K$, the first $N-K$ tuples of ${\rm Dist}_{2}(\Lambda,\Gamma)$ cannot locally satisfy (b4) in the group, and so it needs help from the last $K$ tuples.  Therefore, we realize the function ${\rm Dist}_{2}(\Lambda,\Gamma)$ for the case $N-K\geq K$ and $N-K<K$ separately.  We first define an index function ${\rm IniCol}(\Lambda,\Gamma)=(\ell_j(\Lambda,\Gamma))_{j\in\Lambda}$, which returns the initial index of the columns used in the function ${\rm Dist}_{2}(\Lambda,\Gamma)$ when it is invoked in the {\bf Algorithm}~\ref{alg: Sym}. Hence, ${\rm IniCol}(\emptyset,\emptyset)=0$ and ${\rm IniCol}(\{j\},\{j\})=1$ for $ j\in[M]$. Moreover for $1<\nu<M, 1\leq h\leq r_\nu$,
 \begin{equation} \begin{split}\label{I1}
  \ell_j(\Lambda_{\nu,h},\Lambda_{\nu,h})&=\sum^{\nu-1}_{s=1}\sum^{r_s}_{t=1}\chi_{\Lambda_{s,t}}(j)\frac{(N-K)\alpha_s
  +K\beta_s}{K}  +\sum^{h-1}_{t=1}\chi_{\Lambda_{s,t}}(j)\frac{(N-K)\alpha_\nu+K\beta_\nu}{K}+\chi_{\Lambda_{\nu,h}}(j),\\
  \ell_{\theta}(\{\theta\},\underline{\Lambda_{\nu,h}})
   &=\sum^{\nu-1}_{s=0}r_{s}\frac{(N-K)\alpha_{s+1} +K\beta_{s+1}}{K}
   +(h-1)\frac{(N-K)\alpha_{\nu+1}+K\beta_{\nu+1}}{K}+1,
  \end{split}
 \end{equation}
 where $j\in[M]-\{\theta\}$ and $\chi_{\Lambda}(\cdot)$ is the characteristic function of the set $\Lambda$, i.e., $\chi_{\Lambda}(a)=1$ if $a\in\Lambda$ and $\chi_{\Lambda}(a)=0$ if $a\notin\Lambda$.
 For example, for the case $\theta=1$ in the Example~\ref{eg2}, ${\rm IniCol}(\{2,3\},\{2,3\})=(5,5)$ and ${\rm IniCol}(\{1\},\{1,2\})=5$. Denote ${\bf q}_{\Lambda,h}=\sum_{j\in\Lambda}u_{\ell_j+h-1}$, where $(\ell_j)_{j\in\Lambda}={\rm IniCol}(\Lambda,\Gamma)$.

  (1) ${\rm Dist}_{2}(\Lambda,\Gamma)$ for the case $N\geq 2K$

  Let $t=\frac{N-K}{K}\alpha_{|\Gamma|}$.
  The first step of ${\rm Dist}_{2}(\Lambda,\Gamma)$ is generating the first $N-K$ tuples, that is, arrange the  $t$ $\Lambda$-type sums ${\bf q}_{\Lambda,1},{\bf q}_{\Lambda,2},...,{\bf q}_{\Lambda,t}$ to the first $N-K$ servers according to  the rules (b3),(b4), as displayed in Table~ \ref{tab}.
\begin{table}[ht]
\centering
\caption{}\label{tab}
{\footnotesize
\begin{tabular}{l}
$\overbrace{{\rm Serv}^{(1)}{\rm Serv}^{(2)} ~\cdots ~\cdots ~ {\rm Serv}^{(N-K)}}^{N-K}$ ~${\rm Serv}^{(1)}$~~$\cdots$ \\
  $\underbrace{{\bf q}_{\Lambda,1}~~~~{\bf q}_{\Lambda,1}~\cdots~{\bf q}_{\Lambda,1}}_{K}$~$\underbrace{{\bf q}_{\Lambda,2}~~~~{\bf q}_{\Lambda,2}~\cdots~~{\bf q}_{\Lambda,2}}_{K}$~$\cdots$
\end{tabular}
}
\end{table}
 Therefore, for the server ${\rm Serv}^{(i)},1\leq i\leq N-K$, the $h$-th component of the $i$-th tuple of ${\rm Dist}_{2}(\Lambda,\Gamma)$ is ${\bf q}_{\Lambda, \lceil\frac{(h-1)(N-K)+i}{K}\rceil}$ for $1\leq h\leq \alpha_{|\Gamma|}$.

 The second step of ${\rm Dist}_{2}(\Lambda,\Gamma)$ is  generating the last $K$ tuples such that (b3) and (b4) are satisfied. The generating process is the same as that of the first step. Note that there are exactly $K$ servers, then each tuple is ${\bf q}_{t+1},{\bf q}_{t+2},...,{\bf q}_{t+\beta_{|\Gamma|}}$.

 (2) ${\rm Dist}_{2}(\Lambda,\Gamma)$ for the case $N<2K$

 ${\rm Dist}_{2}(\Lambda,\Gamma)$ generates the first $N-K$ tuples, each of which has the form ${\bf q}_{\Lambda,1},{\bf q}_{\Lambda,2},...,{\bf q}_{\Lambda,\alpha_{|\Gamma|}}$. Note that each $\Lambda$-type sum ${\bf q}_{\Lambda,h}$ does not satisfy the requirement $(b4)$,  because it only appears in $N-K$ servers at present. We must then use $K-(N-K)=2K-N$ out of the last $K$ servers to provide the additional copies of these $\Lambda$-type sums. Therefore, we symmetrically allocate the $\Lambda$-type sums ${\bf q}_{\Lambda,1},{\bf q}_{\Lambda,2},...,{\bf q}_{\Lambda,\alpha_{|\Gamma|}}$ to the last $K$ servers such that each $\Lambda$-type sum appears in $2K-N$ servers. Note that $K\geq 2K-N$ and $K|(2K-N)\alpha_{|\Gamma|}$  from (\ref{C1}). Similarly, as the map displayed in the Table~ \ref{tab}, we can realize such an arrangement and each of the last $K$ servers provide $\frac{2K-N}{K}\alpha_{|\Gamma|}$ $\Lambda$-type sums. Finally, to satisfy the rule (b3), we need to send another $\gamma_{|\Gamma|}\triangleq\beta_{|\Gamma|}-\frac{2K-N}{K}\alpha_{|\Gamma|}$
$\Lambda$-type sums to each of the last $K$ tuples. Note that $\gamma_{|\Gamma|}\geq 0$ from (\ref{C2}).

\subsection{Parameters in the scheme}\label{sec5c}
In this section, we determine the values of $\alpha_j,\beta_j, 1\leq j\leq M$ such that (\ref{C1}) and (\ref{C2}) are satisfied.
Suppose $d={\rm gcd}(N,K),n=\frac{N}{d},k=\frac{K}{d}$. From (\ref{C1}) and (\ref{C2}), we obtain the following equations,
\begin{equation}\label{G1}
\left\{\begin{array}{l}
\beta_{j+1}=\frac{n-k}{k}\alpha_{j},\\
\alpha_{j+1}=\beta_j+\frac{n-2k}{k}\alpha_{j},
\\k|(n-k)\alpha_M,\\
\alpha_j,\beta_j\in \mathbb{N},~1\leq j\leq M.
\end{array}\right.
\end{equation}
From the recursive relations in (\ref{G1}), we obtain two geometric series, i.e.,
\begin{equation}\label{eqser2}{\small
\left\{\begin{array}{l}
(n-k)\alpha_{j+1}+k\beta_{j+1}=\frac{n-k}{k}((n-k)\alpha_j+k\beta_j),\\
\alpha_{j+1}-\beta_{j+1}=-(\alpha_j-\beta_j),1\leq j< M.
\end{array}\right.}
\end{equation}

By assigning proper values to $(\alpha_1,\beta_1)$ or $(\alpha_M,\beta_M)$, we can obtain the integer solutions of (\ref{eqser2}) and in turn get the solutions of (\ref{G1}).
When $N\geq 2K$, set  $\alpha_1=0,\beta_1=k^{M-1}$, we obtain the integer solutions \begin{equation}\label{eqsolution1}
\left\{\begin{array}{l}\alpha_j= \frac{(n-k)^{j-1}-(-k)^{j-1}}{n}k^{M-j+1}, \\
    \beta_j = \frac{(n-k)^{j-2}-(-k)^{j-2}}{n}(n-k)k^{M-j+1}.
\end{array}\right.
\end{equation}
When $N<2K$, set $\alpha_M=0,\beta_M=(n-k)^{M-1}$, we obtain the integer solutions
{\small \begin{equation}\label{eqsolution2}
\left\{\begin{array}{l}\alpha_j=\frac{k^{M-j}-(k-n)^{M-j}}{n}k(n-k)^{j-1}, \\
    \beta_j = \frac{k^{M-j+1}-(k-n)^{M-j+1}}{n}(n-k)^{j-1}.
\end{array}\right.
\end{equation} }
It is easy to verify that (\ref{eqsolution1}) and (\ref{eqsolution2}) provide the solution to (\ref{G1}) for the case $N\geq 2K$ and for the case $K<N<2K$, respectively. In both cases, we can observe that  \begin{equation}\label{eqeq}
\frac{(N-K)\alpha_{j}+K\beta_{j}}{K}=(n-k)^{j-1}k^{M-j}.\;
\end{equation}

 We then calculate the parameter
$\tilde{L}$. For the desired record $W_\theta$,  one can observe from (\ref{I1}) that
\begin{equation}\begin{split}\label{L1}
 \tilde{L}&\geq\ell_{\theta}(\{\theta\},[M])-1+\frac{(N-K)\alpha_{M}+K\beta_M}{K}=n^{M-1}.
 \end{split}
 \end{equation}
However, for each undesired record $U_j,j\in[M]-\{\theta\}$, it has
\begin{equation}\begin{split}\label{L2}
 \tilde{L}&\geq\ell_j([M]-\{\theta\},[M]-\{\theta\})-1+\frac{(N-K)\alpha_{M-1}+K\beta_{M-1}}{K}=kn^{M-2}.
 \end{split}
\end{equation}
Therefore, it is sufficient to set $\tilde{L}=\max\{n^{M-1},kn^{M-2}\}=n^{M-1}$, and thus the sub-packetization $L=K\tilde{L}=Kn^{M-1}$ in our scheme. Note that in (\ref{L1}) and (\ref{L2}) we respectively calculate the number of columns from the desired record and from each undesired record invoked by all $N$ servers, and it turns out that the former is larger than the latter. However, for each individual server, the number of columns from each record are invoked because of the record symmetry within each server. Finally, we calculate the access number of our scheme, which is
\begin{equation}\begin{split}\label{L3}
  \omega=\sum^N_{i=1}\sum^M_{j=1}\binom{M}{j}j\gamma^{(i)}_j
 =MK\sum^{M}_{j=1}\binom{M-1}{j-1}(\frac{N-K}{K}\alpha_{j}+\beta_{j})
 =MKn^{M-1}.
 \end{split}
 \end{equation}

\subsection{Properties of the scheme}\label{sec5d}
 Because the sub-packetization and the access number has been calculated in the last section for simultaneously attaining the lower bounds, we must still
verify that the scheme described in Section \ref{sec5b} satisfies the correctness and the privacy condition and achieves the capacity.

The correctness condition can be easily verified based on the rules (b1), (b4), and the $[N,K]$ MDS code which is used in distributed storage system. The privacy condition is derived from the rules (b2) and (b3), which implies that for each individual server, the same number of columns from each record are invoked in a symmetric form and each column is invoked at most one time.

Finally, from Table~\ref{tab0}, we can compute the download size, i.e.,
  \begin{align*}
  D=\sum^M_{j=1}\binom{M}{j}((N-K)\alpha_j+K\beta_j)
  \stackrel{(a)}{=}K\sum^{M-1}_{j=0}\binom{M}{j}(n-k)^{j-1}k^{M-j}
  =K\frac{n^{M}-k^M}{n-k},
  \end{align*}
where ${\small (a)}$ comes from the identity (\ref{eqeq}). One can immediately check that the rate of our scheme achieves the capacity for coded PIR.
\begin{corollary}
For $M\geq 2$ and $1\leq K<N$, the optimal sub-packetization and the optimal access number for linear capacity-achieving coded PIR schemes from MDS coded non-colluding servers are $Kn^{M-1}$ and $MKn^{M-1}$ respectively, where $d={\rm gcd}(N,K)$ and $n=N/d$.
\end{corollary}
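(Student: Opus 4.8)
The plan is to establish the two bounds and the matching construction independently, then combine them. For the lower-bound direction there is essentially nothing new to do: Theorem~\ref{thm2} already shows that every linear capacity-achieving $[N,K]$ MDS coded PIR scheme with $M\geq 2$ and $N>K\geq 1$ has sub-packetization $L\geq Kn^{M-1}$, and Theorem~\ref{thm3} shows its access number satisfies $\omega\geq MKn^{M-1}$, where $d={\rm gcd}(N,K)$, $n=N/d$. So I would simply invoke these two theorems to get the ``$\geq$'' halves of both statements.

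For the achievability direction I would exhibit the explicit scheme of Section~\ref{sec5b}, instantiated with the parameters of Section~\ref{sec5c}: the solution (\ref{eqsolution1}) when $N\geq 2K$ and (\ref{eqsolution2}) when $K<N<2K$. First I would check that these $\alpha_j,\beta_j$ are nonnegative integers solving the recursions (\ref{G1}), hence the necessary conditions (\ref{C1}) and (\ref{C2}); this makes the functions ${\rm Dist}_1,{\rm Dist}_2$ realizable under rules (b1)--(b4), so Algorithm~\ref{alg: Sym} produces a well-defined query sequence. Then I would appeal to the verification of Section~\ref{sec5d}: correctness follows from rules (b1), (b4) and the MDS property of the storage code; privacy follows from rules (b2), (b3) together with the uniformly random private permutation matrices $S_1,\dots,S_M$; and the download-size computation $D=K\frac{n^{M}-k^{M}}{n-k}$ shows the rate equals $\mathcal{C}_{\mbox{\tiny C-PIR}}$, so the scheme is linear and capacity-achieving. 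Finally the index bookkeeping (\ref{L1})--(\ref{L2}) gives $\tilde L=n^{M-1}$, hence $L=K\tilde L=Kn^{M-1}$, and (\ref{L3}) gives $\omega=MKn^{M-1}$.

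Combining the two directions, the constructed scheme attains both lower bounds simultaneously, so neither bound can be improved, and therefore $Kn^{M-1}$ and $MKn^{M-1}$ are exactly the optimal sub-packetization and the optimal access number for linear capacity-achieving coded PIR schemes over MDS coded non-colluding servers, which is the claim of the corollary.

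The hard part will be the combinatorial realizability of ${\rm Dist}_1$ and ${\rm Dist}_2$, especially in the regime $N-K<K$: there each $\Lambda$-type sum placed among the first $N-K$ servers must be completed to exactly $K$ servers using $2K-N$ of the last $K$, which requires $K\mid (2K-N)\alpha_{|\Gamma|}$ and $\gamma_{|\Gamma|}=\beta_{|\Gamma|}-\frac{2K-N}{K}\alpha_{|\Gamma|}\geq 0$, facts that must be deduced from (\ref{C1})--(\ref{C2}). Equally delicate is confirming that the initial column indices assigned by ${\rm IniCol}$ in (\ref{I1}) never exceed $\tilde L=n^{M-1}$ and that the desired-record column count (\ref{L1}) dominates the undesired-record count (\ref{L2}), so that $\tilde L=n^{M-1}$ genuinely suffices; once that bookkeeping is in place, the capacity, correctness and privacy checks are routine.
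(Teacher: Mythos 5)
Your proposal is correct and follows essentially the same route as the paper: the corollary is obtained by combining the lower bounds of Theorem~\ref{thm2} and Theorem~\ref{thm3} with the explicit scheme of Section~\ref{sec5}, whose sub-packetization $Kn^{M-1}$ and access number $MKn^{M-1}$ are computed in (\ref{L1})--(\ref{L3}) and whose correctness, privacy, and capacity-achievability are verified in Section~\ref{sec5d}. The delicate points you flag (realizability of ${\rm Dist}_1,{\rm Dist}_2$ via (\ref{C1})--(\ref{C2}) and the index bookkeeping) are exactly the ones the paper handles in Sections~\ref{sec5b}--\ref{sec5c}.
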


\section{Conclusions}\label{sec6}

In this paper, we investigated the problem of minimizing the sub-packetization and the access number for all linear capacity-achieving PIR schemes from $[N,K]$ MDS coded non-colluding servers. The optimal sub-packetization and the optimal access number are explicitly determined for all nontrivial cases. The process of proving the lower bound on the sub-packetization is an extension of the proof of \cite{Zhang&Xu17:OptimalSubpacketization} from replicated PIR to coded PIR, and the main design idea for reducing the sub-packetization by using partial symmetry  among servers is the same as that used in \cite{Zhang&Xu17:OptimalSubpacketization}. However, some extra proof skills and design rules are specially developed  for coded PIR, such as the function Vec introduced in Section 3.2 and the design rule (b4) in Section 5.2. In addition, our approach for characterizing the minimum sub-packetization and the minimum access number can be extended to other PIR models.

\end{document}